\def\thm@space@setup{%
  \thm@preskip=\parskip \thm@postskip=0pt
}
\newcommand{\dd}{{\mathrm d}}
\newcommand{\e}{{\rm e}}
\newcommand{\E}{{\mathbb E}}
\newcommand{\Pa}{{\mathbb P}}
\newcommand{\Q}{{\mathbb Q}}
\newcommand{\R}{{\mathbb R}}
\newcommand{\N}{{\mathbb N}}
\newcommand{\Fcal}{{\mathcal F}}
\newcommand{\Gcal}{{\mathcal G}}
\newcommand{\Scal}{{\mathcal S}}
\newtheorem{proposition}{Proposition}[section]
\newtheorem{lemma}[proposition]{Lemma}
\newtheorem{remark}[proposition]{Remark}
\newtheorem{example}[proposition]{Example}
\title{A Term Structure Model for Dividends and Interest Rates\footnote{We thank participants at the Workshop on Dynamical Models in Finance in Lausanne, the 8th General Advanced Mathematical Methods in Finance conference in Amsterdam, the 2nd International Conference on Computational Finance in Lisbon, the 11th Actuarial and Financial Mathematics Conference in Brussels, the 2018 Swiss Finance Institute Research Days in Gerzensee, the 10th Bachelier World Congress in Dublin, the 2018 Young Researchers Workshop on Data-Driven Decision Making at Cornell University, the 2019 Cambridge-Lausanne workshop,  and seminars at McMaster University, New York University, Princeton University, UC Berkely, and University College Dublin, as well as Peter Carr, J\'er\^ome Detemple (discussant), Alexey Ivashchenko (discussant), Martin Lettau, Chris Rogers (discussant), Radu Tunaru, and three anonymous referees for their comments. The research leading to these results has received funding from the European Research Council under the European Union's Seventh Framework Programme (FP/2007-2013) / ERC Grant Agreement n.~307465-POLYTE.}}
\author{Damir Filipovi\'c\footnote{EPFL and Swiss Finance Institute. Email: damir.filipovic@epfl.ch} \quad \quad Sander Willems\footnote{EPFL and Swiss Finance Institute. Email: willems.sander@gmail.com}}
\date{May 22, 2020\\
forthcoming in \textit{Mathematical Finance}}
\providecommand{\noopsort}[1]{}
\begin{document}
\maketitle

\begin{abstract}
Over the last decade, dividends have become a standalone asset class instead of a mere side product of an equity investment. We introduce a framework based on polynomial jump-diffusions to jointly price the term structures of dividends and interest rates. Prices for dividend futures, bonds, and the dividend paying stock are given in closed form. We present an efficient moment based approximation method for option pricing. In a calibration exercise we show that a parsimonious model specification has a good fit with Euribor interest rate swaps and swaptions, Euro Stoxx 50 index dividend futures and dividend options, and Euro Stoxx 50 index options.

\medskip

\noindent \textbf{JEL Classification}: C32, G12, G13\\
\noindent \textbf{MSC2010 Classification}: 	91B70, 91G20, 91G30

\medskip
\noindent \textbf{Keywords}: Dividend derivatives, interest rates, polynomial jump-diffusion, term structure,  moment based option pricing

\end{abstract}

\section{Introduction}
In recent years there has been an increasing interest in trading derivative contracts with a direct exposure to dividends. \cite{brennan1998stripping} argues that a market for dividend derivatives could promote rational pricing in stock markets. In the over-the-counter (OTC) market, dividends have been traded since 2001 in the form of dividend swaps, where the floating leg pays the dividends realized over a predetermined period of time. The OTC market also accommodates a wide variety of more exotic dividend related products such as knock-out dividend swaps, dividend yield swaps and swaptions. Dividend trading gained significant traction in late 2008, when Eurex launched exchange traded futures contracts referencing the dividends paid out by constituents of the Euro Stoxx 50. The creation of a futures market for other major indices (e.g., the FTSE 100 and Nikkei 225) followed shortly after, as well as the introduction of exchange listed options on realized dividends with maturities of up to ten years.
Besides the wide variety of relatively new dividend instruments, there is another important dividend derivative that has been around since the inception of finance: a simple dividend paying stock. Indeed, a share of stock includes a claim to all the dividends paid over the stock's lifetime. Any pricing model for dividend derivatives should therefore also be capable of efficiently pricing derivatives on the stock paying the dividends. What's more, the existence of interest rate-dividend hybrid products, the relatively long maturities of dividend options, and the long duration nature of the stock all motivate the use of stochastic interest rates. Despite its apparent desirability, a tractable joint model for the term structures of interest rates and dividends, and the corresponding stock, has been missing in the literature to date.

We fill this gap and develop an integrated framework to efficiently price derivatives on dividends, stocks, and interest rates. We first specify dynamics for the dividends and discount factor, and in a second step we recover the stock price in closed form as the sum of the fundamental stock price (present value of all future dividends) and possibly a residual bubble component.
The instantaneous dividend rate is a linear function of a multivariate factor process. The interest rates are modeled by directly specifying the discount factor to be linear in the factors, similarly as in \cite{filipovic2014linear}. The factor process itself is specified as a general polynomial jump-diffusion, as studied in \cite{Filipovic2017}. Such a specification makes the model tractable because all the conditional moments of the factors are known in closed form. In particular, we have closed form expressions for the stock price and the term structures of dividend futures and interest rate swaps. Any derivative whose discounted payoff can be written as a function of a polynomial in the factors is priced through a moment matching method. Specifically, we find the unique probability density function with maximal  Boltzmann-Shannon entropy matching a finite number of moments of the polynomial, as in \cite{mead1984maximum}. We then obtain the price of the derivative by numerical integration. In particular, this allows us to price swaptions, dividend options, and options on the dividend paying stock. We show that our polynomial framework also allows to incorporate seasonal behavior in the dividend dynamics.

Within our polynomial framework, we introduce the \emph{linear jump-diffusion} (LJD) model. We show that the LJD model allows for a flexible dependence structure between the factors. This is useful to model a dependence between dividends and interest rates, but also to model the dependence within the term structure of interest rates or dividends. We calibrate a parsimonious specification of the LJD model to market data on Euribor interest rate swaps and swaptions, Euro Stoxx 50 index dividend futures and dividend options, and Euro Stoxx 50 index options. Our model reconciles the relatively large implied volatility of the index options with the relatively small implied volatility of dividend options and swaptions through a negative correlation between dividends and interest rates. The successful calibration of the model to three different classes of derivatives (interest rates, dividends, and equity) illustrates the high degree of flexibility offered by our framework.

Our paper is related to various strands of literature. In the literature on stock option pricing, dividends are often assumed to be either deterministic (e.g., \citet{Bos}, \citet{Bos2}, \citet{vellekoop2006efficient}), a constant fraction of the stock price (e.g., \citet{merton1973theory}, \cite{korn2005stocks}), or a combination of the two (e.g., \cite{kim1995alternative}, \cite{overhaus2007equity}). \cite{geske1978pricing}  and \cite{lioui2006black} model dividends as a stochastic fraction of the stock price. They derive Black-Scholes type of equations for European option prices, however dividends are not guaranteed to be non-negative in both setups.  \cite{chance2002european} directly specify log-normal dynamics for the $T$-forward price of the stock, with $T$ the maturity of the option. Closed form option prices are obtained as in \cite{black1976pricing}, assuming that today's $T$-forward price is observable. This approach is easy to use since it does not require any modeling assumptions on the distribution of the dividends. However, it does not produce consistent option prices for different maturities. \cite{bernhart2015consistent} take a similar approach, but suggest to fix a time horizon $T$ long enough to encompass all option maturities to be priced. The $T$-forward price is modeled with a non-negative martingale and the stock price is defined as the $T$-forward price plus the present value of dividends from now until time $T$. As a consequence, prices of options with maturity smaller than $T$ will depend on the joint distribution between future dividend payments and the $T$-forward price, which is not known in general. \cite{bernhart2015consistent} resort to numerical tree approximation methods in order to price options. The dependence of their model on a fixed time horizon still leads to time inconsistency, since the horizon will necessarily have to be extended at some point in time. We contribute to this literature by building a stock option pricing model that guarantees non-negative dividends, is time consistent, and remains tractable.

Another strand of literature studies stochastic models to jointly price stock and dividend derivatives. \citet{buehler2010stochastic} assumes that the stock price jumps at known dividend payment dates and follows log-normal dynamics in between the payment dates. The jump amplitudes are driven by an Ornstein-Uhlenbeck process such that the stock price remains log-normally distributed and the model has closed form prices for European call options on the stock. The high volatility in the stock price is reconciled with the low volatility in dividend payments by setting the correlation between the Ornstein-Uhlenbeck process and the stock price extremely negative ($-95\%$). A major downside of the model is that dividends can be negative. Moreover, although  the model has a tractable stock price, the dividends themselves are not tractable and Monte-Carlo simulations are required to price the dividend derivatives. In more recent work, \cite{buehler2015volatility} decomposes the stock price in a fundamental component and a residual bubble component. The dividends are defined as a function of a secondary driving process that mean reverts around the residual bubble component. This model has closed form expressions for dividend futures, but Monte-Carlo simulations are still necessary to price nonlinear derivatives. \cite{guennoun2017equity} consider a stochastic local volatility model for the pricing of stock and dividend derivatives. Their model guarantees a perfect fit to observed option prices, however all pricing is based on Monte-Carlo simulations.
\cite{tunaru2017dividend} proposes two different models to value dividend derivatives. The first model is similar to the one of \cite{buehler2010stochastic}, but models the jump amplitudes with a beta distribution. This guarantees positive dividend payments. However, the diffusive noise of the stock is assumed independent of the jump amplitudes in order to have tractable expressions for dividend futures prices. Smoothing the dividends through a negative correlation between stock price and jump amplitudes, as in \cite{buehler2010stochastic}, is therefore not possible. In a second approach, \cite{tunaru2017dividend} directly models the cumulative dividends with a diffusive logistic growth process. This process has, however, no guarantee to be monotonically increasing, meaning that negative dividends can occur frequently. \cite{willems2019linear} jointly specifies dynamics for the stock price and the dividend rate such that the stock price is positive and the dividend rate is a non-negative process mean-reverting around a constant fraction of the stock price. The model of \cite{willems2019linear} is in fact a special case of the general framework introduced in our paper, although it is different from the LJD model and does not incorporate stochastic interest rates.
We add to this literature by allowing for stochastic interest rates, which is important for the valuation of interest rate-dividend hybrid products or long-dated dividend derivatives (e.g., the dividend paying stock). Our model produces closed form prices for dividend futures and features efficient approximations for option prices which are significantly faster than Monte-Carlo simulations. In particular, we give an example of a hybrid option on the dividend-interest rate spread that can be priced efficiently in our framework. The low volatility in dividends and interest rates is reconciled with the high volatility in the stock price through a negative correlation between dividends and interest rates. 

Our work also relates to literature on constructing an integrated framework for dividends and interest rates. Previous approaches were mainly based on affine processes, see e.g. \cite{bekaert1999stock}, \cite{mamaysky2002joint}, \cite{d2006international}, \citet{lettau2007long, lettau2011term}, and \cite{lemke2009term}. In more recent work, \cite{kragt2014dividend} extract investor information from dividend derivatives by estimating a two-state affine state space model on stock index dividend futures in four different stock markets. Instead of modeling dividends and interest rates separately, they choose to model dividend growth, a risk-free discount rate, and a risk premium in a single variable called the `discounted risk-adjusted dividend growth rate'. \cite{yan2014estimating} uses zero-coupon bond prices and present value claims to dividend extracted from the put-call parity relation to estimate an affine term structure model for interest rates and dividends. \cite{suzuki2014measuring} uses a Nelson-Siegel approach to estimate the fundamental value of the Euro Stoxx 50 using dividend futures and Euribor swap rates. We add to this literature by building an integrated framework for dividends and interest rates using the class of polynomial processes, which contains the traditional affine processes as a special case.

Finally, our work also relates to literature on moment based option pricing.
\cite{jarrow1982approximate}, \cite{corrado1996s}, and \cite{collin2002pricing} use Edgeworth expansions to approximate the density function of the option payoff from the available moments. Closely related are Gram-Charlier expansions, which are used for option pricing for example by \cite{corrado1996skewness}, \cite{jondeau2001gram}, and \cite{ackerer2016jacobi}. Although these series expansions allow to obtain a function that integrates to one and matches an arbitrary number of moments by construction, it has no guarantee to be positive. In this paper, we find the unique probability density function with maximal Boltzmann-Shannon entropy. subject to a finite number of moment constraints. Option prices are then obtained by numerical integration. A similar approach is taken by \cite{fusai2002accurate} to price Asian options. The principle of maximal entropy has also been used to extract the risk-neutral distribution from option prices, see e.g.\ \cite{buchen1996maximum}, \cite{jackwerth1996recovering}, \cite{avellaneda1998minimum}, and \cite{rompolis2010retrieving}. There exist many alternatives to maximizing the entropy in order to find a density function satisfying a finite number of moment constraints. For example, one can maximize the smoothness of the density function (see e.g., \cite{jackwerth1996recovering}) or directly maximize (minimize) the option price itself to obtain an upper (lower) bound on the price (see e.g., \cite{lasserre2006pricing}). A comparison of different approaches is beyond the scope of this paper.

The remainder of the paper is structured as follows. Section \ref{section:pol_FW} introduces the factor process and discusses the pricing of dividend futures, bonds, and the dividend paying stock. In Section \ref{section:option_pricing} we explain how to efficiently approximate option prices using maximum entropy moment matching. Section \ref{section:LJD} describes the LJD model. In Section \ref{section:numerical} we calibrate a parsimonious model specification to real market data. Section \ref{section:extensions} discusses some extensions of the framework. Section \ref{section:conclusion} concludes. All proofs and technical details can be found in the appendix.

\section{Polynomial framework}\label{section:pol_FW}
We consider a financial market modeled on a filtered probability space $(\Omega,\Fcal,\Fcal_t,\Q)$ where $\Q$ is a risk-neutral pricing measure. Henceforth $\E_t[\cdot]$ denotes the $\Fcal_t$-conditional expectation. We model the uncertainty in the economy through a factor process $X_t$ taking values in some state space $E\subseteq \R^d$.\footnote{We assume that $E$ has non-empty interior.} We assume that $X_t$ is a polynomial jump-diffusion (cfr.\ \cite{Filipovic2017}) with dynamics
\begin{equation}
\dd X_t = \kappa(\theta - X_t)\,\dd t +\dd M_{t},
\label{eq:factor_process}
\end{equation}
for some parameters $\kappa\in \R^{d\times d}$, $\theta\in\R^d$, and some $d$-dimensional martingale $M_t$ such that the generator $\Gcal$ of $X_t$ maps polynomials to polynomials of the same degree or less. One of the main features of polynomial jump-diffusions is the fact that they admit closed form conditional moments. For $n\in\N$, denote by $\mathrm{Pol}_n(E)$ the space of of polynomials on $E$ of degree $n$ or less and denote its dimension by $N_n$.\footnote{Since the interior of $E$ is assumed to be non-empty, $\mathrm{Pol}_n(E)$ can be identified with $\mathrm{Pol}_n(\R^d)$ and therefore $N_n={n+d \choose d}$.}
Let $h_1,\ldots,h_{N_n}$ form a polynomial basis for $\mathrm{Pol}_n(E)$ and denote $H_n(x)=(h_1(x),\ldots,h_{N_n}(x))^\top$. Since $\Gcal$ leaves $\mathrm{Pol}_n(E)$ invariant, there exists a unique matrix $G_n\in\R^{N_n\times N_n}$ representing the action of $\Gcal$ on $\mathrm{Pol}_n(E)$ with respect to the basis $H_n(x)$.
Without loss of generality we assume to work with the monomial basis.
\begin{example}
If $n=1$, then we have $H_1(x)=(1,x_1,\ldots,x_d)^\top$ and $G_1$ becomes
\begin{equation}
G_1=\begin{pmatrix}
0&0\\
\kappa\theta & -\kappa
\end{pmatrix}.
\label{eq:conditional_exp}
\end{equation}
\end{example}
From the invariance property of $\Gcal$, one can derive the moment formula (Theorem 2.4 in  \cite{Filipovic2017})
\begin{align}
\E_t[H_n(X_T)]=\e^{G_n(T-t)}H_n(X_t),
\label{eq:moment_formula}
\end{align}
for all $t\le T$. Many efficient algorithms exist to numerically compute the matrix exponential (e.g., \cite{al2011computing}).

\subsection{Dividend futures}
Consider a stock that pays a continuous dividend stream to its owner at an instantaneous rate $D_t$, which varies stochastically over time. We model the cumulative dividend process $C_t=C_0+\int_0^t D_s\,\dd s$ as:
\begin{equation}
C_t=\e^{\beta t} p^\top H_1(X_t),
\label{eq:cumul_dividend_spec}
\end{equation}
for some parameters $\beta\in\R$ and $p\in\R^{d+1}$ such that $C_t$ is a positive, non-decreasing, and absolutely continuous (i.e., drift only) process.
This specification for $C_t$ implicitly pins down $D_t$, which is shown in the following proposition.
\begin{proposition}\label{prop:dividend_rate} The instantaneous dividend rate $D_t$ implied by \eqref{eq:cumul_dividend_spec} is given by
\begin{equation}
D_t=\e^{\beta t}p^\top (\beta \mathrm{Id}+G_1)H_1(X_t),
\label{eq:dividend_spec}
\end{equation}
where $\mathrm{Id}$ denotes the identity matrix.
\end{proposition}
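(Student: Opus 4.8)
The plan is to use the defining property of the polynomial generator at degree one, namely that $\Gcal$ acts on $\mathrm{Pol}_1(E)$ through the matrix $G_1$, in order to write down the semimartingale decomposition of $H_1(X_t)$ and then read off the drift of $C_t$ by integration by parts. Since the components of $H_1(x)=(1,x_1,\dots,x_d)^\top$ lie in $\mathrm{Pol}_1(E)$ and $\Gcal H_1 = G_1 H_1$, the $(d+1)$-dimensional process $H_1(X_t)$ satisfies
\begin{equation*}
H_1(X_t) = H_1(X_0) + \int_0^t G_1 H_1(X_s)\,\dd s + N_t,
\end{equation*}
where $N_t$ is a martingale null at $t=0$; this is exactly the infinitesimal version of the moment formula \eqref{eq:moment_formula} with $n=1$ (equivalently, Dynkin's formula applied to each coordinate of $H_1$).

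Next I would apply integration by parts to $C_t = \e^{\beta t} p^\top H_1(X_t)$. As $t \mapsto \e^{\beta t}$ is deterministic, continuous, and of finite variation, there is no covariation term, so
\begin{equation*}
C_t = C_0 + \int_0^t \e^{\beta s} p^\top(\beta\,\mathrm{Id} + G_1) H_1(X_s)\,\dd s + \int_0^t \e^{\beta s} p^\top\,\dd N_s .
\end{equation*}
The last term, call it $L_t$, is a local martingale. By hypothesis $p$ is chosen so that $C_t$ is absolutely continuous, and the first integral on the right is absolutely continuous as well, hence $L_t = C_t - C_0 - \int_0^t \e^{\beta s} p^\top(\beta\,\mathrm{Id}+G_1)H_1(X_s)\,\dd s$ is absolutely continuous, in particular continuous and of finite variation. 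A continuous finite-variation local martingale that is null at $0$ vanishes identically, so $L \equiv 0$. Comparing with $C_t = C_0 + \int_0^t D_s\,\dd s$ then yields $\int_0^t D_s\,\dd s = \int_0^t \e^{\beta s} p^\top(\beta\,\mathrm{Id}+G_1)H_1(X_s)\,\dd s$ for all $t$, which is \eqref{eq:dividend_spec}.

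I do not anticipate a real obstacle; the one place needing care is the passage from ``$C_t$ is absolutely continuous'' to ``the martingale part of $C_t$ vanishes,'' which is precisely where the standing assumption on $p$ (drift-only $C_t$) enters. A slightly different route, more in the spirit of the rest of the paper, avoids stochastic calculus altogether: taking $\Fcal_t$-conditional expectations in $C_T = C_t + \int_t^T D_s\,\dd s$ and using \eqref{eq:moment_formula} gives $\E_t[C_T] = \e^{\beta T} p^\top \e^{G_1(T-t)} H_1(X_t)$; differentiating in $T$ (justified by dominated convergence, using continuity of $s\mapsto\E_t[D_s]$ and local boundedness of the relevant moments) gives $\E_t[D_T] = \e^{\beta T} p^\top(\beta\,\mathrm{Id}+G_1)\e^{G_1(T-t)}H_1(X_t)$, and setting $T=t$ produces \eqref{eq:dividend_spec}.
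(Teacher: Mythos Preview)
Your proposal is correct. Your primary route---writing the Dynkin decomposition $H_1(X_t)=H_1(X_0)+\int_0^t G_1H_1(X_s)\,\dd s+N_t$, applying integration by parts with the deterministic factor $\e^{\beta t}$, and then using the standing ``drift-only'' assumption on $C_t$ to kill the local-martingale term---is sound and genuinely different from the paper's argument. The paper takes exactly your secondary route: it computes $\E_t[C_T]=\e^{\beta T}p^\top\e^{G_1(T-t)}H_1(X_t)$ from the moment formula, differentiates in $T$, and evaluates at $T=t$ using $D_t=\partial_T\E_t[C_T]\big|_{T=t}$. Your stochastic-calculus argument is a bit more structural---it makes explicit where the absolute-continuity hypothesis on $C_t$ is used and shows the identity pathwise rather than via expectations---while the paper's approach is lighter, needing only the moment formula and ordinary calculus, and stays closer in spirit to the rest of the paper. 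Both are short and valid; it is a nice touch that you anticipated the paper's method as an alternative.
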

Remark that both the instantaneous dividend rate and the cumulative dividends load linearly on the factor process. The exponential scaling of $C_t$ with parameter $\beta$ can be helpful to guarantee a non-negative instantaneous dividend rate. Indeed, if
\begin{equation}\lambda=\displaystyle \sup_{x\in E}-\frac{p^\top G_1 H_1(x)}{p^\top H_1(x)}
\label{eq:lambda}
\end{equation}
is finite, then it follows from \eqref{eq:dividend_spec} that $D_t\ge 0$ if and only if $\beta \ge \lambda$.\footnote{We calculate $\lambda$ explicitly for the  linear jump-diffusion model studied in Section \ref{section:LJD}.}
Moreover, when all eigenvalues of $\kappa$ have positive real parts, it follows from the moment formula \eqref{eq:moment_formula} that
\[
\lim_{T\to\infty}\frac{1}{T-t}\log \left(\frac{\E_t[D_T]}{D_t}\right)=\beta.
\]
The parameter $\beta$ therefore controls the asymptotic risk-neutral expected growth rate of the dividends.

The time-$t$ price of a continuously marked-to-market futures contract referencing the dividends to be paid over a future time interval $[T_1,T_2]$ with expiry date $T_2$, $t\le T_1\le T_2$, is given by:
\begin{align}
D_{fut}(t,T_1,T_2)
&=\E_t\left[\int_{T_1}^{T_2}D_s\,\dd s\right]\nonumber \\
&=\E_t\left[C_{T_2}-C_{T_1}\right]\nonumber \\
&=p^\top
\left(\e^{\beta T_2}\e^{G_1(T_2-t)}-\e^{\beta T_1}\e^{G_1(T_1-t)}\right)
H_1(X_t), \label{eq:price_div_futures}
\end{align}
where we have used the moment formula \eqref{eq:moment_formula} in the last equality.
Hence, the dividend futures price is linear in the factor process. Note that the dividend futures term structure (i.e., the dividend futures prices for varying $T_1$ and $T_2$) does not depend on the specification of the martingale part of $X_t$.

\subsection{Bonds and swaps}
Denote the risk-neutral discount factor by $\zeta_t$. It is related to the short rate $r_t$ as follows
\[\zeta_T=\zeta_t \e^{-\int_t^Tr_s\,\dd s},\quad 0\le t\le T.\]
We directly specify dynamics for the risk-neutral discount factor:
\begin{align}
\zeta_t=\e^{-\gamma t} q^\top H_1(X_t),\label{eq:discount_factor}
\end{align}
for some parameters $\gamma\in\R$ and $q\in \R^{d+1}$ such that $\zeta_t$ is a  positive and absolutely continuous process. This is similar to the specification \eqref{eq:cumul_dividend_spec} of $C_t$ but, in order to allow for negative interest rates, we do not require $\zeta_t$ to be monotonic (non-increasing). \cite{filipovic2014linear} follow a similar approach and specify linear dynamics for the state price density with respect to the historical probability measure $\Pa$. Their specification pins down the market price of risk. It turns out that the polynomial property of the factor process is not preserved under the change of measure from $\Pa$ to $\Q$ in this case. However, as seen in \eqref{eq:price_div_futures}, the polynomial property (in particular the linear drift) under $\Q$ is important for pricing the dividend futures contracts.

The time-$t$ price of a zero-coupon bond paying one unit of currency at time $T\ge t$ is given by:
\[
P(t,T)=\frac{1}{\zeta_t}\E_t\left[\zeta_T\right].
\]
Using the moment formula \eqref{eq:moment_formula} we get a linear-rational expression for the zero-coupon bond price
\begin{equation}
P(t,T)=\e^{-\gamma(T-t)}\frac{q^\top\,
\e^{G_1(T-t)}H_1(X_t)}{q^\top H_1(X_t)}.
\label{eq:price_bond}
\end{equation}
Remark that the term structure of zero-coupon bond prices depends only on the drift of $X_t$. Similarly as in \cite{filipovic2014linear}, one can introduce exogenous factors feeding into  the martingale part of $X_t$ to generate unspanned stochastic volatility (see e.g., \cite{collin2002bonds}), however we do not consider this in our paper.

Using the relation $r_t=-\partial_T \log P(t,T)\vert_{T=t}$, we obtain the following linear-rational expression for the short rate:
\begin{equation*}
r_t=\gamma-\frac{q^\top G_1 H_1(X_t)}{q^\top H_1(X_t)}.
\end{equation*}
When all eigenvalues of $\kappa$ have positive real parts, it follows that
\[
\lim_{T\to\infty}-\frac{\log (P(t,T))}{T-t}=\gamma,
\]
so that $\gamma$ can be interpreted as the yield on the zero-coupon bond with infinite maturity.

Ignoring differences in liquidity and credit characteristics between discount rates and IBOR rates, we can value swap contracts as linear combinations of zero-coupon bond prices. The time-$t$ value of a payer interest rate swap with first reset date $T_0\ge t$, fixed leg payment dates $T_1<\cdots<T_n$, and fixed rate $K$ is given by:
\begin{equation}
\pi_t^{swap}=P(t,T_0)-P(t,T_n)- K\sum_{k=1}^n\delta_k P(t,T_k),
\label{eq:swap_value}
\end{equation}
with $\delta_k=T_k-T_{k-1}$, $k=1\ldots,n$. The forward swap rate is defined as the fixed rate $K$ which makes the right hand side of \eqref{eq:swap_value} equal to zero. Note that the discounted swap value $\zeta_t \pi_t^{swap}$ becomes a linear function of $X_t$, which will be important for the purpose of pricing swaptions.

\subsection{Dividend paying stock}
Denote by $S^\ast_t$ the \emph{fundamental price} of the stock, which we define as the present value of all future dividends:
\begin{equation}
S^\ast_t=\frac{1}{\zeta_t}\E_t\left[\int_t^\infty \zeta_sD_s\,\dd s\right].
\label{eq:fundamental_stock}
\end{equation}
In order for $S_t^\ast$ to be finite in our model, we must impose parameter restrictions. The following proposition provides sufficient conditions on the parameters, together with a closed form expression for $S_t^\ast$. The latter is derived using the fact that $\zeta_t D_t$ is quadratic in $X_t$, hence we are able to calculate its conditional expectation through the moment formula \eqref{eq:moment_formula}.
\begin{proposition}\label{prop:stock_price}
If the real parts of the eigenvalues of $G_2$ are bounded above by $\gamma-\beta$, then $S_t^\ast$ is finite and given by
\begin{equation}
S_t^\ast=\e^{\beta t}\frac{w^\top\,H_2(X_t)}{q^\top H_1(X_t)},
\label{eq:stock_price}
\end{equation}
where $w=\left[(\gamma-\beta)\, \mathrm{Id}-G_2^{\top}\right]^{-1}v$ and $v\in \R^{N_2}$ is the unique coordinate vector satisfying
\[
v^\top H_2(x)= p^\top (\beta \mathrm{Id}+G_1) H_1(x)\, q^\top H_1(x).
\]
\end{proposition}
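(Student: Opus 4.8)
The plan is to compute the conditional expectation in \eqref{eq:fundamental_stock} explicitly by recognizing that $\zeta_s D_s$ is a degree-two polynomial in the factors, so that the moment formula \eqref{eq:moment_formula} applies directly. First I would combine \eqref{eq:discount_factor} with Proposition~\ref{prop:dividend_rate} to write
\[
\zeta_s D_s=\e^{(\beta-\gamma)s}\,\bigl(q^\top H_1(X_s)\bigr)\bigl(p^\top(\beta\,\mathrm{Id}+G_1)H_1(X_s)\bigr),
\]
a product of two linear forms in $H_1(X_s)$ and hence an element of $\mathrm{Pol}_2(E)$. Because the monomials of degree at most two form a basis of $\mathrm{Pol}_2(E)$ (this is where the non-empty interior of $E$ enters), there is a unique coordinate vector $v\in\R^{N_2}$ with $v^\top H_2(x)=p^\top(\beta\,\mathrm{Id}+G_1)H_1(x)\,q^\top H_1(x)$, giving $\zeta_s D_s=\e^{(\beta-\gamma)s}v^\top H_2(X_s)$.

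Next, since $C_t$ is non-decreasing we have $D_s\ge 0$, and $\zeta_s>0$, so $\zeta_s D_s\ge 0$ and the conditional Tonelli theorem permits interchanging $\E_t[\cdot]$ with $\int_t^\infty\cdot\,\dd s$. Applying the moment formula \eqref{eq:moment_formula} to $H_2$ and substituting $s=t+u$ then yields
\[
\E_t\!\left[\int_t^\infty\zeta_s D_s\,\dd s\right]=\e^{(\beta-\gamma)t}\,v^\top\!\left(\int_0^\infty\e^{\left(G_2-(\gamma-\beta)\mathrm{Id}\right)u}\,\dd u\right)H_2(X_t),
\]
using that the scalar matrix $(\beta-\gamma)\mathrm{Id}$ commutes with $G_2$. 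Under the hypothesis that the eigenvalues of $G_2$ have real parts below $\gamma-\beta$, the matrix $G_2-(\gamma-\beta)\mathrm{Id}$ is stable, so this improper integral converges and equals $\bigl((\gamma-\beta)\mathrm{Id}-G_2\bigr)^{-1}$; in particular the right-hand side is finite, which also retroactively makes the Tonelli conclusion substantive and shows $S_t^\ast<\infty$.

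Finally I would divide by $\zeta_t=\e^{-\gamma t}q^\top H_1(X_t)$ and rewrite the row vector $v^\top\bigl((\gamma-\beta)\mathrm{Id}-G_2\bigr)^{-1}=\Bigl(\bigl((\gamma-\beta)\mathrm{Id}-G_2^\top\bigr)^{-1}v\Bigr)^\top=w^\top$, which is exactly \eqref{eq:stock_price}. The main obstacle, and essentially the only non-routine point, is the interchange of integration and expectation together with the convergence of the matrix integral; both rest entirely on the spectral bound on $G_2$, which should be read as strict (or else one must separately rule out Jordan blocks whose eigenvalue lies on the line $\mathrm{Re}=\gamma-\beta$) so that the integrand decays and the improper integral converges.
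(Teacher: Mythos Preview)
Your proof is correct and follows essentially the same route as the paper: express $\zeta_s D_s$ as a degree-two polynomial via the coordinate vector $v$, apply the moment formula to obtain $v^\top\e^{G_2(s-t)}H_2(X_t)$, and integrate the resulting matrix exponential using the spectral bound on $G_2$. Your version is in fact slightly more careful, since you explicitly invoke Tonelli to justify the interchange of $\E_t$ and $\int_t^\infty$ and you flag that the eigenvalue bound must be read strictly for the improper integral to converge.
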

Proposition \ref{prop:stock_price} shows that the discounted fundamental stock price $\zeta_t S_t^\ast$ is quadratic in $X_t$, which means in particular that we have all  moments of $\zeta_t S_t^\ast$ in closed form. Loosely speaking, the fundamental stock price will be finite as long as the dividends are discounted at a sufficiently high rate (by choosing $\gamma$ sufficiently large). Henceforth we will assume that the assumption of Proposition \ref{prop:stock_price} is satisfied.

The following proposition shows how the price of the dividend paying stock, which we denote by $S_t$, is related to the fundamental stock price.\footnote{This relationship has been highlighted in particular by \cite{buehler2010volatility,buehler2015volatility} in the context of derivative pricing.}
\begin{proposition}\label{prop:stock_price_formal}
The market is arbitrage free if and only if $S_t$ is of the form
\begin{align}
S_t=S_t^\ast+\frac{L_t}{\zeta_t},\label{eq:stock_price_formal}
\end{align}
with $L_t$ a non-negative local martingale.
\end{proposition}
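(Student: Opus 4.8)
The plan is to phrase no-arbitrage in terms of the discounted cum-dividend \emph{gains process} of the stock,
\begin{equation*}
G_t := \zeta_t S_t + \int_0^t \zeta_s D_s\,\dd s ,
\end{equation*}
and to use the standard fact (the fundamental theorem of asset pricing, with $\Q$ fixed as the pricing measure) that the market is arbitrage free if and only if $G_t$ is a nonnegative $\Q$-local martingale. The key auxiliary object is
\begin{equation*}
M_t := \E_t\!\left[\int_0^\infty \zeta_s D_s\,\dd s\right] .
\end{equation*}
It is well defined because, under the assumptions of Proposition~\ref{prop:stock_price}, the random variable $Y := \int_0^\infty \zeta_s D_s\,\dd s$ is $\Q$-integrable (with $\E[Y] = \E[\zeta_0 S_0^\ast] < \infty$), so $M$ is a uniformly integrable $\Q$-martingale. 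From the definition \eqref{eq:fundamental_stock} of $S_t^\ast$ and the $\Fcal_t$-measurability of $\int_0^t \zeta_s D_s\,\dd s$ one obtains the exact identity $M_t = \zeta_t S_t^\ast + \int_0^t \zeta_s D_s\,\dd s$, hence $G_t - M_t = \zeta_t\,(S_t - S_t^\ast)$. Everything then reduces to comparing $G$ with $M$.

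For the sufficiency direction I would assume $S_t = S_t^\ast + L_t/\zeta_t$ for some nonnegative local martingale $L$. Then $\zeta_t S_t = \zeta_t S_t^\ast + L_t$, so $G_t = M_t + L_t$, which is a local martingale (sum of the martingale $M$ and the local martingale $L$) and nonnegative (sum of two nonnegative processes); hence the market is arbitrage free. For the converse I would assume the market is arbitrage free, so that $G$ is a nonnegative $\Q$-local martingale and therefore, by Fatou's lemma, a supermartingale. Setting $L_t := G_t - M_t = \zeta_t(S_t - S_t^\ast)$, this is a local martingale as the difference of $G$ and the true martingale $M$, and it only remains to check that $L_t \ge 0$. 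For any $T \ge t$, the supermartingale inequality together with $\zeta_T S_T \ge 0$ gives
\begin{equation*}
G_t \;\ge\; \E_t[G_T] \;\ge\; \E_t\!\left[\int_0^T \zeta_s D_s\,\dd s\right] ,
\end{equation*}
and letting $T \to \infty$, conditional monotone convergence makes the right-hand side increase to $\E_t[Y] = M_t$; thus $G_t \ge M_t$, i.e.\ $L_t \ge 0$. Dividing by $\zeta_t > 0$ then yields \eqref{eq:stock_price_formal} with the nonnegative local martingale $L_t = \zeta_t(S_t - S_t^\ast)$.

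I expect the only genuinely delicate step to be the nonnegativity of the bubble term $L$ in the converse direction: it relies on a nonnegative local martingale being a supermartingale and on the $T \to \infty$ limiting argument, which is precisely where the finiteness of $S_t^\ast$ from Proposition~\ref{prop:stock_price} — making $M$ a genuine (in fact uniformly integrable) martingale closed by $Y$ — is needed. Two further points deserve care: one should fix at the outset the precise notion of no arbitrage being invoked (e.g.\ no free lunch with vanishing risk), and one should observe that the bond prices \eqref{eq:price_bond} and dividend futures prices \eqref{eq:price_div_futures}, being defined directly as $\Q$-(conditional) expectations, are automatically arbitrage-consistent, so that the content of the proposition concerns the stock alone.
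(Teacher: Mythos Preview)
Your proposal is correct and follows essentially the same approach as the paper: both start from the characterisation of no arbitrage via the nonnegative local martingale property of the deflated gains process $G_t=\zeta_tS_t+\int_0^t\zeta_sD_s\,\dd s$, set $L_t=G_t-\E_t[\int_0^\infty\zeta_sD_s\,\dd s]$, and obtain $L_t\ge 0$ by the supermartingale inequality for $G$ combined with a $T\to\infty$ limit. The only cosmetic difference is that the paper phrases the limit as $\E_t[-\int_T^\infty\zeta_sD_s\,\dd s]\to 0$ whereas you use conditional monotone convergence on $\E_t[\int_0^T\zeta_sD_s\,\dd s]\uparrow M_t$; these are equivalent.
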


The process $L_t$ can be interpreted as a bubble in the sense that it drives a wedge between the fundamental stock price and the observed stock price. If $X_t$ is continuous, then applying It\^o's lemma to \eqref{eq:stock_price_formal} and using the fact that $\zeta_t$ is assumed to be absolutely continuous, we obtain the following risk-neutral stock price dynamics
\begin{equation}
\dd S_t=(r_tS_t-D_t)\,\dd t+\e^{\beta t}\frac{w^\top\,\mathrm{J}_{H_2}(X_t)}{q^\top H_1(X_t)}
\,\dd M_t +\frac{1}{\zeta_t}\dd L_t,
\label{eq:stock_dynamics}
\end{equation}
where $\mathrm{J}_{H_2}(x)$ denotes the Jacobian of $H_2(x)$.\footnote{A similar, but lengthier, expression can be derived in case there are jumps in $X_t$. We choose to omit it since it does not add much value to the discussion that follows.}    Remark that $S_t$ has the correct risk-neutral drift, by construction. Given dynamics for $r_t$ and $D_t$, an alternative approach to model $S_t$ for derivative pricing purposes would have been to directly specify its martingale part. With such an approach, however, it is not straightforward to guarantee a positive stock price. Indeed, the downward drift of the instantaneous dividend rate could push the stock price in negative territory.\footnote{Instead of starting from dynamics for $D_t$, we could have specified dynamics for the dividend yield $D_t/S_t$. This would help to keep the stock price positive, but it does typically not produce a tractable distribution for $D_t$. This is problematic since dividend derivatives reference notional dividend payments paid out over a certain time period.}
Moreover, by directly specifying the martingale part of the stock price, we are implicitly modeling a bubble because the stock price will be greater than the present value of all future dividends in general. In contrast, our approach implies a martingale part (the second term in \eqref{eq:stock_dynamics}) that guarantees a positive stock price. This martingale part is completely determined by the given specification for dividends and interest rates. In case this is too restrictive for the stock price dynamics, one can always adjust accordingly through the specification of the non-negative local martingale $L_t$. For example, \cite{buehler2015volatility} considers a local volatility model on top of the fundamental stock price that is separately calibrated to equity option prices.

\begin{remark}
Bubbles are usually associated with strict local martingales, see e.g.\ \cite{cox2005local}. In fact, for economies with a finite time horizon, a bubble is only possible if the deflated gains process is a strict local martingale, which corresponds to a bubble of Type 3 according to the classification of \cite{jarrow2007asset}. For economies with an infinite time horizon, which is the case in our setup, bubbles are possible even if the deflated gains process is a true martingale. Such bubbles are of Type 1 and 2 in the classification \cite{jarrow2007asset}. Specifically, a (uniformly integrable) martingale $L_t$ corresponds to a bubble of Type 2 (Type 1).
\end{remark}

The $T$-forward stock price at time $t\le T$ is defined as
\[
F(t,T) =\frac{1}{\zeta_t} \frac{\E_t[\zeta_{T} S_{T}]}{P(t,T)} .
\]
If $L_t$ is a true martingale, then we can compute $F(t,T)$ explicitly in our framework using the moment formula \eqref{eq:moment_formula}
\begin{align*}
F(t,T) & = \frac{\E_t[\zeta_{T} S_{T}]}{\E_t[\zeta_{T}]}  
= \frac{\e^{(\beta - \gamma) T} w^\top \E_t[H_2(X_T)] + \E_t[L_T]}{\e^{- \gamma T}q^\top \E_t[H_1(X_T)]}
= \frac{\e^{(\beta - \gamma) T} w^\top \e^{G_2(T-t)}H_2(X_t) + L_t}{\e^{- \gamma T}q^\top \e^{G_1(T-t)}H_1(X_t)}.
\end{align*}

We finish this section with a result on the duration of the stock. We define the stock duration as
\begin{equation}\label{eq:duration}
Dur_t=\frac{\int_t^\infty (s-t)\,\E_t[\zeta_s D_s]\,\dd s}{\zeta_t S_t^\ast}.
\end{equation}
The stock duration represents a weighted average of the time an investor has to wait to receive his dividends, where the weights are the relative contribution of the present value of the dividends to the fundamental stock price. This definition is the continuous time version of the one used by \cite{dechow2004implied} and \cite{weber2018cash}. The following proposition gives a closed form expression for stock duration in our framework.
\begin{proposition}\label{prop:duration}
The stock duration is given by
\begin{equation}
Dur_t=\frac{w^\top\left[(\gamma-\beta)\, \mathrm{Id}-G_2\right]^{-1}H_2(X_t)}{w^\top H_2(X_t)}.
\end{equation}
\end{proposition}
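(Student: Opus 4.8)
The denominator of \eqref{eq:duration} is already in closed form: from \eqref{eq:stock_price} and \eqref{eq:discount_factor} we have $\zeta_t S_t^\ast=\e^{(\beta-\gamma)t}\,w^\top H_2(X_t)$, so the whole problem reduces to evaluating the numerator integral. The first step is to observe, combining Proposition~\ref{prop:dividend_rate} with \eqref{eq:discount_factor} and the definition of $v$ in Proposition~\ref{prop:stock_price}, that
\[
\zeta_s D_s=\e^{(\beta-\gamma)s}\,p^\top(\beta\,\mathrm{Id}+G_1)H_1(X_s)\;q^\top H_1(X_s)=\e^{(\beta-\gamma)s}\,v^\top H_2(X_s).
\]
Applying the moment formula \eqref{eq:moment_formula} to $H_2$ then gives $\E_t[\zeta_s D_s]=\e^{(\beta-\gamma)s}\,v^\top\e^{G_2(s-t)}H_2(X_t)$.

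Next I would substitute this into $\int_t^\infty(s-t)\,\E_t[\zeta_s D_s]\,\dd s$, change variables to $u=s-t$, and pull out the deterministic factor, obtaining
\[
\int_t^\infty(s-t)\,\E_t[\zeta_s D_s]\,\dd s=\e^{(\beta-\gamma)t}\,v^\top\Bigl(\int_0^\infty u\,\e^{Au}\,\dd u\Bigr)H_2(X_t),\qquad A:=(\beta-\gamma)\,\mathrm{Id}+G_2.
\]
Under the assumption of Proposition~\ref{prop:stock_price} the eigenvalues of $A$ have negative real parts, so $\int_0^\infty\e^{Au}\,\dd u=-A^{-1}$ and, integrating by parts once more, $\int_0^\infty u\,\e^{Au}\,\dd u=A^{-2}=\bigl[(\gamma-\beta)\,\mathrm{Id}-G_2\bigr]^{-2}$. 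I would then use $w=[(\gamma-\beta)\,\mathrm{Id}-G_2^{\top}]^{-1}v$, equivalently $v^\top=w^\top\bigl[(\gamma-\beta)\,\mathrm{Id}-G_2\bigr]$, to absorb one factor, so that the numerator becomes $\e^{(\beta-\gamma)t}\,w^\top\bigl[(\gamma-\beta)\,\mathrm{Id}-G_2\bigr]^{-1}H_2(X_t)$. Dividing by $\zeta_t S_t^\ast=\e^{(\beta-\gamma)t}\,w^\top H_2(X_t)$ cancels the exponential prefactor and the scalar $q^\top H_1(X_t)$, leaving exactly the stated expression.

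There is no real obstacle here beyond bookkeeping: the only points requiring care are the Fubini interchange of $\E_t$ and the time integral, and the matrix identity $\int_0^\infty u\,\e^{Au}\,\dd u=A^{-2}$, both of which are guaranteed by the same spectral condition on $A$ (strict negativity of the real parts of its eigenvalues) that underlies the finiteness of $S_t^\ast$ in Proposition~\ref{prop:stock_price} and is in force by the standing assumption; one should also keep track of the transpose in the relation between $v$ and $w$.
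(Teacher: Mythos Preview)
Your proposal is correct and follows essentially the same route as the paper: express $\E_t[\zeta_s D_s]=\e^{(\beta-\gamma)s}v^\top\e^{G_2(s-t)}H_2(X_t)$ via the moment formula, compute $\int_0^\infty u\,\e^{Au}\,\dd u=A^{-2}$ (the paper writes this out as an explicit integration by parts), and then use $v^\top=w^\top[(\gamma-\beta)\,\mathrm{Id}-G_2]$ to absorb one inverse factor. One small expository slip: when you divide by $\zeta_t S_t^\ast=\e^{(\beta-\gamma)t}w^\top H_2(X_t)$ there is no $q^\top H_1(X_t)$ left to cancel---that cancellation already happened when forming $\zeta_t S_t^\ast$ from \eqref{eq:stock_price} and \eqref{eq:discount_factor}.
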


\section{Option pricing}\label{section:option_pricing}
In this section we address the problem of pricing derivatives with discounted payoff functions that are not polynomials in the factor process. The polynomial framework no longer allows to price such derivatives in closed form. However, we can accurately approximate the prices using the available moments of the factor process.

\subsection{Maximum entropy moment matching}\label{section:ME}
In all examples encountered below, we consider a derivative maturing at time $T$ whose discounted payoff is given by $F(g(X_T))$, for some $g\in \mathrm{Pol}_n(E)$, $n\in\N$, and some function $F\colon\R\to\R$. The time-$t$ price $\pi_t$ of this derivative is given by
\begin{equation}
\pi_t=\E_t\left[F\big(g(X_T\big)\right].
\label{eq:scalar_derivative_payoff}
\end{equation}
If the conditional distribution of the random variable $g(X_T)$ were available in closed form, we could compute $\pi_t$ by integrating $F$ over the real line. In general, however, we are only given all the conditional moments of the random variable $g(X_T)$. We thus aim to construct an approximative probability density function $f$ matching a finite number of these moments. In a second step we approximate the option price through numerically integrating $F$ with respect to $f$. Given that a function is an infinite dimensional object, finding such a function $f$ is clearly an underdetermined problem and we need to introduce additional criteria to pin down one particular function. A popular choice in the engineering and physics literature is to choose the density function with maximum entropy:
\begin{equation}
\def\arraystretch{2}
\begin{array}{ccc}
 \displaystyle \max_{f}&-\displaystyle\int_R f(x)\ln f(x)\,\dd x&\\
 \mathrm{s.t.}&\displaystyle\int_R x^nf(x)\,\dd x=M_n,&\quad n=0,\ldots,N,
\end{array}
\label{eq:entropy_optim}
\end{equation}
where $R\subseteq \R$ denotes the support and $M_0=1,M_1,\ldots,M_N$ denote the first $N+1$ moments of $g(X_T)$. \cite{jaynes1957information} motivates such a choice by noting that maximizing entropy incorporates the least amount of prior information in the distribution, other than the imposed moment constraints. In this sense it is maximally noncommittal with respect to unknown information about the distribution.

Straightforward functional variation with respect to $f$ gives the following solution to this optimization problem:
\begin{align*}
f(x)=\exp\left(-\sum_{i=0}^N\lambda_i x^i\right),\quad x\in R,
\end{align*}
where the Lagrange multipliers $\lambda_0,\ldots,\lambda_N$ have to be solved from the moment constraints:
\begin{align}
\int_R x^n\exp\left(-\sum_{i=0}^N\lambda_i x^i\right)\,\dd x=M_n, \quad n=0,\ldots,N.
\label{eq:ME_lagrange}
\end{align}

If $N=0$ and $R=[0,1]$, then we recover the uniform distribution. For $N=1$ and $R=(0,\infty)$ we obtain the exponential distribution, while for $N=2$ and $R=\R$ we obtain the Gaussian distribution. For $N\ge 3$, one needs to solve the system in \eqref{eq:ME_lagrange} numerically, which involves evaluating the integrals numerically.\footnote{
Directly trying to find the roots of this system might not lead to satisfactory results. A more stable numerical procedure is obtained by introducing the following potential function:
$
P(\lambda_0,\ldots,\lambda_N)=\int_R \exp(-\sum_{i=0}^N\lambda_i x^i)\,\dd x + \sum_{i=0}^N\lambda_i M_i
$.
This function can easily be shown to be everywhere convex (see e.g., \cite{mead1984maximum}) and its gradient corresponds to the vector of moment conditions in \eqref{eq:ME_lagrange}. In other words, the Lagrange multipliers can be found by minimizing the potential function $P(\lambda_0,\ldots,\lambda_N)$. This is an unconstrained convex optimization problem where we have closed form (up to numerical integration) expressions for the gradient and hessian, which makes it a prototype problem to be solved with Newton's method. }
We refer to the existing literature for more details on the implementation of maximum entropy densities, see e.g. \cite{agmon1979algorithm}, \cite{mead1984maximum}, \cite{rockinger2002entropy}, and  \cite{holly2011fourth}.

\begin{remark}\label{remark:path_dependent}
By subsequently combining the law of iterated expectations and the moment formula \eqref{eq:moment_formula}, we are also able to compute the conditional moments of the finite dimensional distributions of $X_t$. In particular, the method described in this section can also be applied to price path-dependent derivatives whose discounted payoff depends on the factor process at a finite number of future time points. One example of such products are the dividend options, which will be discussed below.
\end{remark}

\subsection{Swaptions, stock and dividend options}
The time-$t$ price $\pi^{swaption}_t$ of a payer swaption with expiry date $T_0$, which gives the owner the right to enter into a (spot starting) payer swap at $T_0$, is given by:
\begin{align*}
\pi^{swaption}_t&=\frac{1}{\zeta_t}\E_t\left[\zeta_{T_0}\left(\pi^{swap}_{T_0}\right)^+\right]\\
&=\frac{1}{\zeta_t}\E_t\left[\left(\zeta_{T_0}-\zeta_{T_0}P(T_0,T_n)- K\sum_{k=1}^n\delta_k \zeta_{T_0}P(T_0,T_k)\right)^+\right]\\
&=\frac{\e^{-\gamma(T_0-t)}}{\vphantom{\big|} q^\top H_1(X_t)}\E_t\left[\left(q^\top\left(\mathrm{Id}-\e^{(G_1-\gamma\mathrm{Id})(T_n-T_0)}- K\sum_{k=1}^n\delta_k\e^{(G_1-\gamma\mathrm{Id})(T_k-T_0)}\right)H_1(X_{T_0})\right)^+\right],
\end{align*}
where we have used \eqref{eq:price_bond} in the last equality.
Observe that the discounted payoff of the swaption is of the form in \eqref{eq:scalar_derivative_payoff} with $F(\cdot)=\max(\cdot,0)$ and $g$ is a polynomial of degree one in $X_{T_0}$.

The time-$t$ price $\pi_t^{stock}$ of a European call option on the dividend paying stock with strike $K$ and expiry date $T$ is given by
\begin{align}
\pi_t^{stock}&=\frac{1}{\zeta_t}\E_t\left[\zeta_T(S_T-K)^+\right]\nonumber \\
&=\frac{1}{\zeta_t}\E_t\left[(L_T+\zeta_TS_T^\ast-\zeta_T K)^+\right]\nonumber \\
&=\frac{ \e^{-\gamma (T-t)}}{\vphantom{\big|} q^\top H_1(X_t)}\E_t\left[\left(\e^{\gamma T}L_T+\e^{\beta T}w^\top H_2(X_T)-q^\top H_1(X_T) K\right)^+\right],\label{eq:stock_option}
\end{align}
where we have used \eqref{eq:stock_price} in the last equality.
If $(L_t,X_t)$ is jointly a polynomial jump-diffusion, we can compute all moments of the random variable $\e^{\gamma T}L_T+\e^{\beta T}w^\top H_2(X_T)-q^\top H_1(X_T) K$ and proceed as explained in Section \ref{section:ME}.
\begin{remark}
If one assumes independence between the processes $L_t$ and $X_t$, then the assumption that $(L_t,X_t)$ must jointly be a polynomial jump-diffusion is not necessarily needed. Indeed, suppose $L_t$ is specified such that we can compute $F(k)=\e^{-\gamma (T-t)}\E_t[(\e^{\gamma T}L_T-k)^+]$ efficiently. By the law of iterated expectations we have
\begin{align*}
\pi_t^{stock}=\frac{\E_t\left[F(g(X_T))\right]}{\vphantom{\big|} q^\top H_1(X_t)},
\end{align*}
where we define $g(x)=-\e^{\beta T}w^\top H_2(x)+q^\top H_1(x) K\in \mathrm{Pol}_2(E)$. The numerator in the above expression is now of the form in \eqref{eq:scalar_derivative_payoff} and we proceed as before.
\end{remark}

Consider next a European call option on the dividends realized in $[T_1,T_2]$, expiry date $T_2$, and strike price $K$. This type of options are actively traded on the Eurex exchange where the Euro Stoxx 50 dividends serve as underlying. The time-$t$ price $\pi^{div}_t$ of this product is given by
\begin{align*}
\pi^{div}_t&=\frac{1}{\zeta_{t}}\E_t\left[\zeta_{T_2}\left(\int^{T_2}_{T_1}D_s\,\dd s-K\right)^+\right]\\
&=\frac{1}{\zeta_{t}}\E_t\left[\left(\zeta_{T_2}(C_{T_2}-C_{T_1}-K)\right)^+\right]\\
&=\frac{ \e^{-\gamma (T_2-t)}}{\vphantom{\big|} q^\top H_1(X_t)}
\E_t\left[\left(q^\top H_1(X_{T_1})\left(\e^{\beta T_2}p^\top H_1(X_{T_2})-\e^{\beta T_1}p^\top H_1(X_{T_1})-K\right)\right)^+\right].
\end{align*}
We can compute in closed form all the moments of the scalar random variable
\[q^\top H_1(X_{T_2})\left(\e^{\beta T_2}p^\top H_1(X_{T_2})-\e^{\beta T_1}p^\top H_1(X_{T_1})-K\right)\]
by subsequently applying the law of iterated expectations and the moment formula \eqref{eq:moment_formula}, see Remark \ref{remark:path_dependent}. We then proceed as before by finding the maximum entropy density corresponding to these moments and computing the option price by numerical integration.

\subsection{Interest rate-dividend hybrid option}\label{sec:hybrid}
We describe in this section an interest rate-dividend hybrid derivative that gives direct exposure to dividend payments and interest rate movements. Consider a tenor structure $T_0<\cdots <T_N$. At time $T_k$, $k=1,\ldots, N$, the derivative pays the positive part of the difference between the dividends realized over $[T_{k-1},T_k]$, normalized by the $T_{k-1}$-forward stock price, and the in-arrears compounded risk-free rate augmented with a constant spread $s\in\R$:
\[
\left(\frac{1}{F(T_0,T_{k-1})}\int_{{T_{k-1}}}^{T_k} D_u\,\dd u 
-(T_k-T_{k-1})(R^c(T_{k-1},T_k) + s\big)\right)^+,
\] 
where we define the in-arrears compounded risk-free rate as
\[
R^c(T_{k-1},T_k) = \frac{1}{T_k-T_{k-1}}\left(\e^{\int_{T_{k-1}}^{T_k} r_u \,\dd u}-1\right).
\]
%
%
This payoff structure is particularly relevant in the context of the transition of LIBOR to alternative risk-free rates (ARFRs), where term rates are constructed by compounding daily fixings of a benchmark rate based on overnight rates. In our setting, we proxy the overnight rate by the short rate and the daily compounding by continuous compounding. 

The price $\pi_{T_0}^{hybrid}$ at time $T_0$ is given by
\begin{align*}
\pi_{T_0}^{hybrid}&=\frac{1}{\zeta_{T_0}}\E_{T_0}\left[\sum_{k=1}^N \zeta_{T_k}\left(\frac{1}{F(T_0,T_{k-1})}(C_{T_k} - C_{T_{k-1}}) -  \left(\frac{\zeta_{T_{k-1}}}{\zeta_{T_k}} -1\right) - s(T_k-T_{k-1})\right)^+ \right]\\
&=\frac{1}{\zeta_{T_0}}\sum_{k=1}^N \E_{T_0}\left[\left(\frac{1}{F(T_0,T_{k-1})}\zeta_{T_k}(C_{T_k} - C_{T_{k-1}}) -  \left(\zeta_{T_{k-1}} -\zeta_{T_k}\right) - s\zeta_{T_k}(T_k-T_{k-1})\right)^+ \right].
\end{align*}
We can compute in closed form all the $\Fcal_{T_0}$-conditional moments of the scalar random variables
\[
\frac{1}{F(T_0,T_{k-1})}\zeta_{T_k}(C_{T_k} - C_{T_{k-1}}) -  \left(\zeta_{T_{k-1}} -\zeta_{T_k}\right) - s\zeta_{T_k}(T_k-T_{k-1}),\quad k=1,\ldots,N,
\]
by subsequently applying the law of iterated expectations and the moment formula \eqref{eq:moment_formula}, see Remark \ref{remark:path_dependent}. We then proceed as before by finding the maximum entropy density corresponding to these moments and computing the option price by numerical integration.

\section{The linear jump-diffusion model}\label{section:LJD}
In this section we give a worked-out example of a factor process that fits in the polynomial framework of Section \ref{section:pol_FW}. In the following, if $x\in\R^d$ then $\mathrm{diag}(x)$ denotes the diagonal matrix with $x_1,\ldots,x_d$ on its diagonal. If $x\in \R^{d\times d}$, then we denote $\mathrm{diag}(x)=(x_{11},\ldots,x_{dd})^\top$.

The \textit{linear jump-diffusion} (LJD) model assumes the following dynamics for the factor process
\begin{align}
\dd X_t=\kappa(\theta-X_{t})\,\dd t +\mathrm{diag}(X_{t-})\left( \Sigma\,\dd B_t + \dd J_t\right),
\label{eq:LJD}
\end{align}
where $B_t$ is a standard $d$-dimensional Brownian motion, $\Sigma\in\R^{d\times d}$ is a lower triangular matrix with non-negative entries on its main diagonal, $J_t$ is a compensated compound Poisson process with arrival intensity $\xi\ge 0$ and a jump distribution $F(\dd z)$ that admits moments of all orders.\footnote{For simplicity we assume a compound Poisson process with a single jump intensity, however this can be generalized (see \cite{Filipovic2017}).}
Both the jump amplitudes and the Poisson jumps are assumed to be independent from the diffusive noise. The purely diffusive LJD specification (i.e., $\xi=0$) has appeared in various financial contexts such as stochastic volatility (\cite{nelson1990arch}, \cite{barone2005option}), energy markets (\cite{Pilipovic1997energy}), interest rates (\cite{brennan1979continuous}), and Asian option pricing (\cite{linetsky2004spectral}, \cite{willems2018asian}). The extension with jumps has not received much attention yet.
%

The following proposition verifies that $X_t$ is indeed a polynomial jump-diffusion and also shows how to choose parameters such that $X_t$ has positive components.
\begin{proposition}\label{prop:posSol}
Assume that matrix $\kappa$ has non-positive off-diagonal elements, $(\kappa\theta)_i\ge 0$, $i=1,\ldots,d$, and $F$ has support $\Scal \subseteq (-1,\infty)^d$. Then for every initial value $X_0\in (0,\infty)^{d}$ there exists a unique strong solution $X_t$ to \eqref{eq:LJD} with values in $(0,\infty)^{d}$. Moreover, $X_t$ is a polynomial jump-diffusion.
\end{proposition}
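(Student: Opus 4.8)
The plan is to treat the statement in three parts: (i) \eqref{eq:LJD} has a unique global strong solution with values in $\R^d$; (ii) that solution never leaves $(0,\infty)^d$; (iii) its generator preserves polynomial degree. For (i), I would rewrite \eqref{eq:LJD} in the usual jump-SDE form driven by $B_t$ and the Poisson random measure of $J_t$ (with compensator $\xi F(\dd z)\,\dd t$) and note that the drift $x\mapsto\kappa(\theta-x)$, the diffusion coefficient $x\mapsto\mathrm{diag}(x)\Sigma$, and the jump coefficient $(x,z)\mapsto\mathrm{diag}(x)z$ are all affine in $x$; since $F$ has moments of all orders we have $\int_{\Scal}|z|^2F(\dd z)<\infty$, so these coefficients satisfy the global Lipschitz and linear-growth conditions of the standard existence and uniqueness theory for SDEs with jumps. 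This gives a unique non-exploding strong solution $X_t$ taking values in $\R^d$; uniqueness within $(0,\infty)^d$ then follows a fortiori from (ii).

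For (ii) the right object to control is $\log X^i_t$. Set $\tau_n=\inf\{t\ge0:\min_iX^i_t\le1/n\}$ and $\tau=\sup_n\tau_n$. On $[0,\tau_n\wedge T]$ each component is bounded away from zero and $X$ is bounded (by non-explosion), so It\^o's formula yields
\begin{align*}
\dd\log X^i_t={}&\Bigl(\frac{(\kappa\theta)_i}{X^i_t}-\kappa_{ii}-\frac{1}{X^i_t}\sum_{j\ne i}\kappa_{ij}X^j_t-\xi\int_{\Scal}z_i\,F(\dd z)-\tfrac12(\Sigma\Sigma^\top)_{ii}\Bigr)\dd t\\
&+(\Sigma\,\dd B_t)_i+\dd Q^i_t,
\end{align*}
where $Q^i_t$ is the finite-activity pure-jump process that adds $\log(1+z_i)$ at each jump of $J_t$ with mark $z$, well defined precisely because $\Scal\subseteq(-1,\infty)^d$. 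Under the hypotheses $(\kappa\theta)_i\ge0$ and $\kappa_{ij}\le0$ for $j\ne i$, and using $X^j_t>0$ for $t<\tau$, the first two summands of the drift are non-negative, so the drift is bounded below by the constant $-C_i:=-\kappa_{ii}-\xi\int_{\Scal}z_i\,F(\dd z)-\tfrac12(\Sigma\Sigma^\top)_{ii}$. Comparing with $Z^i_t:=X^i_0\exp(-C_it+(\Sigma B_t)_i+Q^i_t)$, which solves the same equation with the drift replaced by $-C_i$, shows that $\log X^i_t-\log Z^i_t$ is non-decreasing and vanishes at $t=0$, hence $X^i_t\ge Z^i_t>0$ on $[0,\tau)$ and, taking left limits, $X^i_{\tau-}\ge Z^i_{\tau-}>0$ on $\{\tau<\infty\}$. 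Since $\tau_n\uparrow\tau$ forces $\min_iX^i_{\tau-}\le0$ on $\{\tau<\infty\}$, we conclude $\tau=\infty$ a.s., so $X_t\in(0,\infty)^d$ for all $t\ge0$ (a jump, being multiplication by $1+z_i>0$, never spoils positivity).

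For (iii) I would write down the extended generator
\begin{align*}
\Gcal f(x)={}&\nabla f(x)^\top\kappa(\theta-x)+\tfrac12\mathrm{Tr}\bigl(\mathrm{diag}(x)\Sigma\Sigma^\top\mathrm{diag}(x)\nabla^2f(x)\bigr)\\
&+\xi\int_{\Scal}\bigl(f(x+\mathrm{diag}(x)z)-f(x)-\nabla f(x)^\top\mathrm{diag}(x)z\bigr)F(\dd z)
\end{align*}
and check degree preservation term by term for $f\in\mathrm{Pol}_n(\R^d)$: the first term multiplies a degree-$(n-1)$ gradient by a degree-$1$ vector; the $(i,j)$ entry of $\mathrm{diag}(x)\Sigma\Sigma^\top\mathrm{diag}(x)$ equals $x_ix_j(\Sigma\Sigma^\top)_{ij}$, of degree $2$, and multiplies $\partial_{ij}f$ of degree $\le n-2$; and since $x+\mathrm{diag}(x)z=\mathrm{diag}(1+z_1,\dots,1+z_d)\,x$ is, for each fixed $z$, a linear map of $x$, the function $x\mapsto f(x+\mathrm{diag}(x)z)$ is a polynomial in $x$ of the same degree as $f$ with coefficients that are polynomials in $z$, so the moment assumption on $F$ lets me integrate termwise (the integrand being $O(|z|^2)$ near $z=0$), leaving a polynomial of degree $\le n$. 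Hence $\Gcal$ maps $\mathrm{Pol}_n(\R^d)$ into itself for every $n$, which is exactly the polynomial jump-diffusion property invoked in Section~\ref{section:pol_FW}. I expect step (ii) to be the main obstacle: one must identify $\log X^i$ as the quantity to track, observe that the off-diagonal sign condition on $\kappa$ together with $(\kappa\theta)_i\ge0$ makes its drift bounded below, and handle the localization and the multiplicative jumps carefully so that the boundary $\{x_i=0\}$ is genuinely unattainable.
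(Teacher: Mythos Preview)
Your argument is correct. Parts (i) and (iii) are essentially the paper's: global Lipschitz coefficients give a unique $\R^d$-valued strong solution, and the generator computation is the same (your observation that $x+\mathrm{diag}(x)z=\mathrm{diag}(1+z)x$ is a linear map of $x$ is a clean way to phrase what the paper does via monomials). One cosmetic slip: when you say ``the first two summands of the drift are non-negative'' you mean the first and third (the $x$-dependent ones), since $-\kappa_{ii}$ is kept in your constant $-C_i$.

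The positivity proof (ii) is where you and the paper genuinely diverge. The paper first isolates the purely diffusive case in a lemma, replaces $X_t$ by $X_t^+$ in the drift, and then invokes the multi-dimensional comparison theorem of Gei\ss--Manthey against the decoupled geometric Brownian motion $\dd Z_t=-\mathrm{diag}(\mathrm{diag}(\kappa))Z_t^+\,\dd t+\mathrm{diag}(Z_t)\Sigma\,\dd W_t$; jumps are then handled by inducting over the Poisson jump times, applying the diffusive lemma on each inter-jump interval and noting that multiplication by $1+z_i>0$ preserves positivity. You instead work directly with $\log X^i_t$ and treat diffusion and jumps simultaneously: because the Brownian and jump increments of $\log X^i$ and $\log Z^i$ coincide, the difference is absolutely continuous with non-negative derivative, so $X^i_t\ge Z^i_t$ pathwise on $[0,\tau)$ without any appeal to an external comparison result. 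Your route is more self-contained and avoids splitting into diffusive and jump pieces; the paper's route keeps the comparison at the level of the original SDE and outsources the monotonicity to a cited theorem, which makes the stopping-time bookkeeping lighter (the claim that $\tau_n\uparrow\tau$ forces $\min_iX^i_{\tau-}\le0$ is true but deserves the pigeonhole-and-left-limit sentence you only hint at). Both arguments land on the same explicit GBM-type lower barrier.
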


We will henceforth assume that the assumptions of Proposition \ref{prop:posSol} are satisfied, as it allows to derive parameter restrictions to guarantee  $C_t>0$, $\zeta_t>0$, and $D_t\ge 0$. In order to have $p^\top H_1(x)>0 $ and $q^\top H_1(x)>0 $ for all $x\in (0,\infty)^d$, the vectors $p$ and $q$ must have non-negative components with at least one component different from zero. The following proposition introduces a lower bound on $\beta$ such that $D_t\ge0$.

\begin{proposition}\label{prop:nonegative_dividend_rate}
Let $p=(p_0,p_1,\ldots,p_d)^\top\in [0,\infty)^{1+d}$ and denote $\tilde{p}=(p_1,\ldots,p_d)^\top$. Assume that at least one of the $p_1,\ldots,p_d$ is non-zero, so that dividends are not deterministic. Without loss of generality we assume $p_1,\ldots,p_k>0$ and $p_{k+1},\ldots,p_d=0$, for some $1\le k\le d$. If we denote by $\kappa_j$ the $j$-th column of $\kappa$, then we have $D_t\ge 0$ if and only if
\begin{equation}
\beta \ge
\begin{cases}
\max \left\{\dfrac{\tilde{p}^\top\kappa_1}{p_1},\ldots,\dfrac{\tilde{p}^\top\kappa_{k}}{p_{k}}\right\}& \text{if }\, p_0=0, \\[10pt]
\max \left\{-\dfrac{\tilde{p}^\top\kappa\theta}{p_0},\dfrac{\tilde{p}^\top\kappa_1}{p_1},\ldots,\dfrac{\tilde{p}^\top\kappa_{k}}{p_{k}}\right\}& \text{if }\, p_0> 0.
\end{cases}
\label{eq:beta_lower_bound}
\end{equation}

\end{proposition}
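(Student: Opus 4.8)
The plan is to compute $D_t$ explicitly via Proposition \ref{prop:dividend_rate} and reduce the sign condition to a finite collection of linear inequalities by exploiting the structure of the state space $(0,\infty)^d$. By \eqref{eq:dividend_spec} we have $D_t=\e^{\beta t}p^\top(\beta\,\mathrm{Id}+G_1)H_1(X_t)$, and since $\e^{\beta t}>0$ it suffices to show that $p^\top(\beta\,\mathrm{Id}+G_1)H_1(x)\ge 0$ for all $x\in(0,\infty)^d$. Using the explicit form of $G_1$ from \eqref{eq:conditional_exp}, namely $G_1=\begin{pmatrix}0&0\\ \kappa\theta&-\kappa\end{pmatrix}$, and writing $p=(p_0,\tilde p^\top)^\top$, one gets $p^\top(\beta\,\mathrm{Id}+G_1)H_1(x)=\beta p_0+\tilde p^\top\kappa\theta+\sum_{j=1}^d\big(\beta p_j-\tilde p^\top\kappa_j\big)x_j$, after collecting the coefficient of each $x_j$. (Recall $\lambda=\sup_x -p^\top G_1H_1(x)/p^\top H_1(x)$ from \eqref{eq:lambda}, so this is just the computation of that supremum made concrete.) This is an affine function of $x$ on the open orthant; since it depends on $x_{k+1},\dots,x_d$ only through the coefficients $\beta p_j-\tilde p^\top\kappa_j$ with $p_j=0$, those coefficients are $-\tilde p^\top\kappa_j$, which under the hypothesis that $\kappa$ has non-positive off-diagonal entries and $p_j=0$ gives $-\tilde p^\top\kappa_j=-\sum_{i\ne j}p_i\kappa_{ij}\ge 0$ (the diagonal term drops out since $p_j=0$), so those variables only help.

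Next I would show that an affine function $\ell(x)=c_0+\sum_{j=1}^d c_j x_j$ is non-negative on all of $(0,\infty)^d$ if and only if $c_0\ge 0$ and $c_j\ge 0$ for every $j$. The "if" direction is immediate. For "only if": letting $x_j\to 0^+$ for all $j$ gives $c_0\ge 0$ by continuity; fixing $j$, sending $x_i\to 0^+$ for $i\ne j$ and letting $x_j\to\infty$ forces $c_j\ge 0$. Applying this with $c_0=\beta p_0+\tilde p^\top\kappa\theta$ and $c_j=\beta p_j-\tilde p^\top\kappa_j$, non-negativity of $D_t$ is equivalent to
\[
\beta p_0+\tilde p^\top\kappa\theta\ge 0,\qquad \beta p_j\ge\tilde p^\top\kappa_j\ \ (j=1,\dots,d).
\]
For $j>k$ we have $p_j=0$ and, as noted, $\tilde p^\top\kappa_j\le 0=\beta p_j$, so these constraints are automatically satisfied and can be dropped. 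For $j=1,\dots,k$ we have $p_j>0$, so the $j$-th constraint rearranges to $\beta\ge\tilde p^\top\kappa_j/p_j$. The first constraint is vacuous when $p_0=0$ (it reads $\tilde p^\top\kappa\theta\ge 0$, which holds since $\kappa$ has non-positive off-diagonals and $\kappa\theta\ge0$ componentwise — wait, one must be slightly careful: $\tilde p^\top\kappa\theta=\sum_{i=1}^d p_i(\kappa\theta)_i\ge0$ directly from $p_i\ge0$ and $(\kappa\theta)_i\ge0$, so indeed it holds); when $p_0>0$ it rearranges to $\beta\ge-\tilde p^\top\kappa\theta/p_0$. Taking the maximum over the active constraints yields exactly \eqref{eq:beta_lower_bound}.

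The only genuinely delicate point is the justification that $X_t$ actually reaches (or approaches arbitrarily closely) every part of the open orthant, so that the pointwise inequality $p^\top(\beta\,\mathrm{Id}+G_1)H_1(x)\ge0$ for all $x\in(0,\infty)^d$ is not merely sufficient but also necessary for $D_t\ge0$ almost surely for all $t$; this is where I expect to lean on the support properties of the LJD process from Proposition \ref{prop:posSol} (the diffusion component with $\mathrm{diag}(X_{t-})\Sigma\,dB_t$ driving each coordinate, together with the jump distribution having full moments, makes the process's support the whole orthant under mild non-degeneracy, or at least makes $D_t$ attain negative values with positive probability whenever some $c_j<0$). If one only wants the "if" direction — i.e. that \eqref{eq:beta_lower_bound} guarantees $D_t\ge0$ — then the support argument is unnecessary and the proof is the purely algebraic reduction above. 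I would state the equivalence under the standing assumption that $X$ is non-degenerate enough for its closure of support to be $[0,\infty)^d$, and otherwise present it as the characterization of when the coefficient identity holds on the whole orthant.
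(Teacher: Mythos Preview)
Your proof is correct and reaches the same conclusion as the paper's, with essentially the same ingredients: the explicit form of $G_1$, the non-positive off-diagonal entries of $\kappa$ to dispose of the coordinates $j>k$, and the componentwise non-negativity of $\kappa\theta$ to handle the constant term when $p_0=0$.

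The framing differs slightly. The paper works with the quantity $\lambda$ from \eqref{eq:lambda}, i.e.\ it computes $\sup_{x\in(0,\infty)^d}\bigl(-\tilde p^\top\kappa\theta+\sum_j\tilde p^\top\kappa_j x_j\bigr)\big/\bigl(p_0+\sum_{j\le k}p_jx_j\bigr)$ and, after reducing to $x\in(0,\infty)^k$, observes that for $p_0>0$ this ratio is a convex combination of the finitely many values $-\tilde p^\top\kappa\theta/p_0$ and $\tilde p^\top\kappa_j/p_j$, $j\le k$, so the supremum is their maximum. You instead clear the denominator and analyze the affine function $c_0+\sum_j c_jx_j$ directly, using the elementary fact that such a function is non-negative on the open orthant iff every coefficient is non-negative. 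Your route is a bit more direct and avoids the convex-combination viewpoint; the paper's route makes the connection to $\lambda$ more explicit. Both are equally valid.

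Your final paragraph on support is a point the paper does not address: it tacitly identifies ``$D_t\ge 0$'' with the pointwise inequality on all of $(0,\infty)^d$. Your observation that the ``only if'' direction formally requires the process to have full support (or at least to visit neighbourhoods where any negative coefficient would bite) is a genuine refinement.
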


The LJD model allows a flexible instantaneous correlation structure between the factors through the matrix $\Sigma$. This is in contrast to non-negative affine jump-diffusions, a popular choice in term structure modeling when non-negative factors are required, see, e.g., \cite{duf_fil_sch_03}. Indeed, as soon as one introduces a non-zero instantaneous correlation between the factors of a non-negative affine jump-diffusion, the affine (and polynomial) property is lost. Correlation between factors can be used to incorporate a dependence between the term structures of interest rates and dividends, but also to model a dependence within a single term structure. The LJD model also allows for state-dependent, positive and negative, jump sizes of the factors. This again is in contrast to non-negative affine jump-diffusions.

The following proposition provides the eigenvalues of the corresponding matrix $G_2$ under the assumption of a triangular form for $\kappa$. Combined with Proposition \ref{prop:stock_price}, this gives sufficient conditions to guarantee a finite stock price in the LJD model.
\begin{proposition}\label{prop:LJD_eigenvalues}
 If $\kappa$ is a triangular matrix, then the eigenvalues of the matrix $G_2$ are
\begin{gather*}
0, -\kappa_{11},\dots, -\kappa_{dd}, \\
-\kappa_{ii}-\kappa_{jj}+(\Sigma\Sigma^\top)_{ij}+\xi\,\int_\Scal z_iz_j \, F(\dd z),\quad 1\le i,j\le d.
\end{gather*}
The eigenvalues of $G_1$ coincide with the values on the first line.
\end{proposition}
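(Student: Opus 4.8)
The plan is to compute the action of the generator $\Gcal$ of the LJD model directly on the monomial basis of $\mathrm{Pol}_2(E)$, exploit the triangular structure this produces with respect to the degree filtration $\mathrm{Pol}_0(E)\subset\mathrm{Pol}_1(E)\subset\mathrm{Pol}_2(E)$, and then read off the eigenvalues. Recall that $X_t$ is a polynomial jump-diffusion by Proposition~\ref{prop:posSol}, so $G_1$ and $G_2$ are well defined.

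First I would write the generator associated with \eqref{eq:LJD}: for $f\in\mathrm{Pol}_2(E)$,
\[
\Gcal f(x)=\nabla f(x)^\top\kappa(\theta-x)+\tfrac12\mathrm{Tr}\big(\mathrm{diag}(x)\Sigma\Sigma^\top\mathrm{diag}(x)\,\nabla^2f(x)\big)+\xi\int_\Scal\big(f(x+\mathrm{diag}(x)z)-f(x)-\nabla f(x)^\top\mathrm{diag}(x)z\big)\,F(\dd z),
\]
which is well defined on polynomials since $F$ has moments of all orders. Evaluating on monomials gives $\Gcal 1=0$, $\Gcal x_i=(\kappa\theta)_i-(\kappa x)_i$, and, after a short computation in which the part of the jump integrand linear in $z$ cancels against the compensator,
\[
\Gcal(x_ix_j)=(\kappa\theta)_ix_j+(\kappa\theta)_jx_i-\sum_k\kappa_{ik}x_jx_k-\sum_k\kappa_{jk}x_ix_k+\Big((\Sigma\Sigma^\top)_{ij}+\xi\int_\Scal z_iz_j\,F(\dd z)\Big)x_ix_j .
\]

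Second, I would order the monomial basis of $\mathrm{Pol}_2(E)$ as $1$, then $x_1,\dots,x_d$, then the degree-two monomials. The two displays show that $\Gcal$ sends a degree-$k$ monomial into $\mathrm{Pol}_k(E)$, so in this basis $G_2$ is block upper triangular with diagonal blocks $0$ (the action on constants), $-\kappa$ (the degree-one part of $\Gcal$ on $x_1,\dots,x_d$, exactly as in the Example above, which also identifies $G_1$), and a block $A_2$ representing the induced action $x_ix_j\mapsto-\sum_k\kappa_{ik}x_jx_k-\sum_k\kappa_{jk}x_ix_k+(\cdots)x_ix_j$ on the space $\mathrm{Pol}_2(E)/\mathrm{Pol}_1(E)$ of homogeneous degree-two polynomials. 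Hence $\mathrm{spec}(G_1)=\{0\}\cup\mathrm{spec}(-\kappa)$ and $\mathrm{spec}(G_2)=\{0\}\cup\mathrm{spec}(-\kappa)\cup\mathrm{spec}(A_2)$; since $\kappa$ is triangular, $\mathrm{spec}(-\kappa)=\{-\kappa_{11},\dots,-\kappa_{dd}\}$, which already establishes the first line of the statement and the claim about $G_1$.

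Third, and this is the only delicate point, I would show that $A_2$ is itself triangular. Assume without loss of generality that $\kappa$ is lower triangular; if it is upper triangular, relabel the coordinates by $i\mapsto d+1-i$, which leaves the asserted multiset of eigenvalues unchanged. Then $\kappa_{ik}=0$ whenever $k>i$, so for $i\le j$ the coefficient of $x_ix_j$ in $\Gcal(x_ix_j)$ is exactly $-\kappa_{ii}-\kappa_{jj}+(\Sigma\Sigma^\top)_{ij}+\xi\int_\Scal z_iz_j\,F(\dd z)$, while the remaining degree-two monomials that appear, namely $x_kx_j$ with $k<i$ and $x_ix_k$ with $k<j$, all have in at least one slot an index strictly below $\max(i,j)$. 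Ordering the degree-two monomials $\{x_ix_j:i\le j\}$ lexicographically by the pair $(\max(i,j),\min(i,j))$, I would check that every such monomial strictly precedes $x_ix_j$ (handling the boundary case $i=j$, and the positions $k<i$, $k=i$, $i<k$ of the monomial $x_ix_k$ among the sorted pairs); thus $A_2$ is lower triangular and its eigenvalues are its diagonal entries $-\kappa_{ii}-\kappa_{jj}+(\Sigma\Sigma^\top)_{ij}+\xi\int_\Scal z_iz_j\,F(\dd z)$, $1\le i\le j\le d$. Combined with the second step this yields the second line of the statement, the listing $1\le i,j\le d$ there being the same multiset with the symmetric index pairs repeated. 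The main obstacle is purely combinatorial, namely verifying that the chosen monomial order really triangularizes $A_2$ (in particular keeping track of whether $x_ix_k$ is sorted as the pair $(k,i)$ or $(i,k)$); the remainder is the routine generator computation and the standard fact that a triangular matrix has its diagonal entries as spectrum.
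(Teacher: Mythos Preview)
Your proof is correct and takes essentially the same approach as the paper's: both compute the generator on monomials, exploit a monomial ordering that renders the matrix triangular when $\kappa$ is triangular, and read off the eigenvalues as diagonal entries. The only cosmetic differences are that the paper triangularizes all of $G_2$ at once in graded lexicographic order rather than first passing to your block decomposition by degree, and it shortens the combinatorial check by observing upfront that the diffusion and jump parts of $\Gcal$ contribute only to diagonal entries, leaving just the drift term to verify.
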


\section{Numerical study}\label{section:numerical}
In this section we calibrate a parsimonious LJD model specification using daily market data from February to April 2015 obtained from Bloomberg. The purpose of this calibration exercise is to show that a parsimonious model specification is capable of reproducing derivative prices observed in the market. We do not specify the dynamics of the model under the historical probability measure. Hence, we do not study the evolution of risk-premia over time and focus solely on the risk-neutral pricing of derivatives. We leave a study of risk-premia for future research.

\subsection{Data description}
The dividend paying stock in our calibration study is the Euro Stoxx 50, the leading blue-chip stock index in the Eurozone. The index is composed of fifty stocks of sector leading companies from twelve Eurozone countries. We choose to focus on the European market because the dividend futures contracts on the Euro Stoxx 50 are the most liquid in the world and have been around longer than in any other market. \cite{kragt2014dividend} report an average daily turnover of more than EUR 150 million for all expiries combined.
The Euro Stoxx 50 dividend futures contracts are traded on Eurex and reference the sum of the declared ordinary gross cash dividends (or cash-equivalent, e.g.\ stock dividends) on index constituents that go ex-dividend during a given calendar year, divided by the index divisor on the ex-dividend day. Corporate actions that cause a change in the index divisor are excluded from the dividend calculations, e.g.\ special and extraordinary dividends, return of capital, stock splits, etc. On every day of the sample there are ten annual contracts available for trading with maturity dates on the third Friday of December. Specifically, the $k$-th to expire contract, $k=1,\ldots,10$, references the dividends paid between the third Friday of December $2014+k-1$ and the third Friday of December $2014+k$. We interpolate adjacent dividend futures contracts using the approach of \cite{kragt2014dividend} to construct contracts with a constant time to maturity of 1 to 9 years.\footnote{We could also calibrate the model without doing any interpolation of the data. However, in order to make the fitting errors of the sequential calibrations more comparable over time, we choose to interpolate all instruments such that they have a constant time to maturity.} In the calibration we use the contracts with maturities in 1, 2, 3, 4, 5, 7, and 9 years. Figure \ref{fig:market_div_fut} plots the interpolated dividend futures prices with 1, 5, 7, and 9 years to maturity.

Next to the Euro Stoxx 50 dividend futures contracts, there also exist exchange traded options on realized dividends. The maturity dates and the referenced dividends of the options coincide with those of the corresponding futures contracts. At every calibration date, we consider the \cite{black1976pricing} implied volatility of an at-the-money (ATM) dividend option with 2 years to maturity. Since dividend option contracts have fixed maturity dates, we interpolate the implied volatility of the second and third to expire ATM option contract.\footnote{We linearly interpolate the total implied variance $\sigma_{Black}^2 \tau$, where $\sigma_{Black}$ denotes the implied volatility and $\tau$ the maturity of the option.}
Figure \ref{fig:market_div_eq_opt} plots the implied volatilities of the dividend options over time.

The term structure of interest rates is calibrated to European spot-starting swap contracts referencing the six month Euro Interbank Offered Rate (Euribor) with tenors of 1, 2, 3, 4, 5, 7, and 10 years. Figure \ref{fig:market_IRS} plots the par swap rates of swaps with tenors of 1, 5, 7, and 10 years. In addition, we also include ATM swaptions with time to maturity equal to 3 months and underlying swap with tenor 10 years. These are among the most liquid fixed-income instruments in the European market. The swaptions are quoted in terms of normal implied volatility and are plotted in Figure \ref{fig:market_swpt}.

We also consider Euro Stoxx 50 index options with ATM strike and a maturity of 3 months. Their prices are quoted in terms of Black-Scholes implied volatility and plotted in Figure \ref{fig:market_div_eq_opt} together with the dividend options implied volatility. Figure \ref{fig:market_index} plots the Euro Stoxx 50 index level over time.

\subsection{Model specification}
We propose a parsimonious four-factor LJD specification without jumps for $X_t=(X_{0t}^I,X_{1t}^I,X_{0t}^D,X_{1t}^D)^\top$
\begin{align}
\left\{
\def\arraystretch{1.2}
\begin{array}{llll}
\dd X_{0t}^I &=\kappa_0^I \left(X^I_{1t}-X_{0t}^I\right)\,\dd t\\
\dd X^I_{1t}&=\kappa^I_1(\theta^I-X^I_{1t})\,\dd t&+&\sigma^{I}X^I_{1t}\,\dd B_{1t}\\
\dd X_{0t}^D &=\kappa_0^D \left(X^D_{1t}-X_{0t}^D\right)\,\dd t\\
\dd X^D_{1t}&=\kappa^D_1(\theta^D-X^D_{1t})\,\dd t&+&\sigma^D X^D_{1t}\left(\rho\,\dd B_{1t}+\sqrt{1-\rho^2}\,\dd B_{2t}\right)
\end{array}
\right.
,
\label{eq:parsimonious_model}
\end{align}
with $\rho\in[-1,1]$, $\kappa_0^I,\kappa_0^D,\kappa_1^I,\kappa_1^D,\theta^I,\theta^D,\sigma^I,\sigma^D>0$, and $X_0\in (0,\infty)^4$. By Proposition \ref{prop:posSol}, $X_t$ takes values in $(0,\infty)^4$. Since we only include options with ATM strike in the calibration, we choose not to include any jumps in the dynamics in order to keep the number of parameters small. We define the cumulative dividend process as
\[C_t=\e^{\beta t}X_{0t}^D,\]
so that $X_{0t}^D$ and $X_{1t}^D$ are driving the term structure of dividends. The corresponding instantaneous dividend rate becomes
\[
D_t=\e^{\beta t}\left(\left(\beta -\kappa_0^D\right)X_{0t}^D +\kappa_0^D X_{1t}^D\right).
\]
Using Proposition \ref{prop:nonegative_dividend_rate}, we guarantee $D_t\ge 0$ by requiring $\beta \ge \kappa_0^D$. In order to further reduce the number of parameters, we set $\beta=\kappa_0^D$, so that $D_t=\e^{\beta t}\beta X_{1t}^D$ and $X_{0t}^D$ no longer enters in the dynamics of $D_t$. We can thus normalize $C_0=X_{00}^D=1$.

The discount factor process is defined as
\[\zeta_t=\e^{-\gamma t}X_{0t}^I,\]
so that $X_{0t}^I$ and $X_{1t}^I$ are driving the term structure of interest rates. The corresponding short rate becomes
\[
r_t=(\gamma+\kappa_0^I)-\kappa_0^I\frac{X_{1t}^I}{X_{0t}^I},
\]
which is unbounded from below and bounded above by $\gamma+\kappa_0^I$.\footnote{In the more general polynomial framework described in Section \ref{section:pol_FW}, it is possible to lower bound the short rate. For example, one can use compactly supported polynomial processes, similarly as in \cite{ackerer2018linear}.} Dividing $\zeta_t$ by a positive constant does not affect model prices, so for identification purposes we normalize $\theta^I=1$.\footnote{For a constant $k>0$, the dynamics of $(\tilde{X}_{0t}^I,\tilde{X}_{1t}^I)=(kX_{0t}^I,kX_{1t}^I)$ is given by
\[
\left\{
\def\arraystretch{1.2}
\begin{array}{llll}
\dd \tilde{X}_{0t}^I &=\kappa_0^I \left(\tilde{X}^I_{1t}-\tilde{X}_{0t}^I\right)\,\dd t\\
\dd \tilde{X}^I_{1t}&=\kappa^I_1(\tilde{\theta}^I-\tilde{X}^I_{1t})\,\dd t&+&\sigma^{I}\tilde{X}^I_{1t}\,\dd B_{1t}
\end{array}
\right.
,
\]
with $\tilde{\theta}^I=k\theta^I$. The dynamics of $(\tilde{X}^I_{0t},\tilde{X}^I_{1t})$ is therefore of the same form as that of $(X^I_{0t},X^I_{1t})$.}

The matrix $\kappa$ is upper triangular and given by
\[
\kappa=
\begin{pmatrix}
\kappa_0^I&-\kappa_0^I & 0 & 0\\
0& \kappa_1^I & 0 & 0\\
0 & 0 & \kappa_0^D&-\kappa_0^D\\
0 & 0 & 0 &\kappa_1^D
\end{pmatrix}.
\]
The diagonal elements, which coincide with the eigenvalues, of $\kappa$ are all positive by assumption. We can therefore interpret $\gamma$ as the asymptotic zero-coupon bond yield and $\beta$ as the asymptotic risk-neutral expected dividend growth rate. Using Propositions \ref{prop:stock_price} and \ref{prop:LJD_eigenvalues}, we introduce the following constraint on the model parameters in order to guarantee a finite stock price:
\[
\gamma-\beta >\max \left\{0\, ,\, (\sigma^I)^2-2\kappa_1^I\, ,\,(\sigma^D)^2-2\kappa_1^D\, ,\, \sigma^I\sigma^D \rho -\kappa_1^I-\kappa_1^D\right\}.
\]
The parameter $\rho\in[-1,1]$ controls the correlation between interest rates and dividends. Specifically, the instantaneous correlation between the dividend rate and the short rate is given by
\begin{equation}
\frac{\dd [ D,r]_t}{\sqrt{\dd [D,D]_t}\,\sqrt{\dd [r,r]_t}}=-\rho,
\label{eq:corr_D_r}
\end{equation}
where $[\cdot,\cdot]_t$ denotes the quadratic covariation.
The minus sign in front of $\rho$ appears because the Brownian motion $B_{1t}$ drives the discount factor, which is negatively related to the short rate.

We set $L_t\equiv 0$ for parsimony, so that the stock price is equal to the present value of all future dividends, i.e., $S_t\equiv S_t^\ast$.
%

\subsection{Calibration}
We minimize the sum of squared differences between the model and market prices using the Nelder-Mead simplex algorithm. The parameters to be optimized are $\beta$, $\kappa_1^D$, $\theta^D$, $\kappa_0^I$, $\kappa_1^I$, $\gamma$, $\sigma^D$, $\sigma^I$, and $\rho$. We propose an efficient way to filter out the latent factors $X^D_{1t}, X^I_{0t}$, and $X^I_{1t}$ on every day of the sample. For a given set of parameters, the dividend futures price \eqref{eq:price_div_futures} is a linear function of $X^D_{1t}$. We solve for $X^D_{1t}$ through a linear least-squares regression from the dividend futures prices. The discounted swap value $\zeta_t \pi_t^{swap}$ in \eqref{eq:swap_value} is a linear function of the latent factors $X^I_{0t}$ and $X^I_{1t}$.\footnote{Since we are using par swap rates, the value of the swap is equal to zero by definition.} Applying It\^o's lemma to $\zeta_t D_t$, it follows that the discounted stock price $\zeta_tS_t^\ast=\E_t[\int_0^\infty \zeta_s D_s\,\dd s]$ given by \eqref{eq:discount_factor} and \eqref{eq:stock_price} is a linear combination of $X^D_{1t}$, $X^I_{0t}$, $X^I_{1t}$, $X^I_{0t}X^D_{1t}$, and $X^I_{1t}X^D_{1t}$. Since we already solved $X^D_{1t}$ from the dividend futures prices,  $\zeta_t S_t^\ast$ becomes a linear function of $X^I_{0t}$ and $X^I_{1t}$. We solve $X^I_{0t}$ and $X^I_{1t}$ through a weighted linear least-squares regression from the swap rates and the stock price. We assign a relatively large weight to the stock price to make sure it is accurately matched by the model.

Although the option pricing technique described in Seciton \ref{section:ME} works in theory for any finite number of moment constraints, there is a computational cost associated with computing the moments on the one hand, and solving the Lagrange multipliers on the other hand. In the calibration, we use moments up to order four to price swaptions, dividend options, and stock options. The number of moments needed for an accurate option price depends on the specific form of the payoff function and on the model parameters. As an example, Figure \ref{fig:moment_convergence} shows prices of a swaption, dividend option, stock option, and a hybrid option as described in Section \ref{sec:hybrid} for different number of moments matched and a realistic set of parameters. For the hybrid option, we set $T_1-T_0 = 1$, $N=1$, and the spread $s$ such that the option is ATM. The swaption, dividend option, and stock option have the same characteristics as the ones used in the calibration. As a benchmark, we perform a Monte-Carlo simulation of the model. We discretize \eqref{eq:parsimonious_model} at a weekly frequency with a simple Euler scheme and simulate $10^5$ trajectories.\footnote{In addition, we also use the corresponding forward contracts as control variates. This variance reduction technique reduces the variance of the Monte-Carlo estimator approximately by a factor 4.} We observe that using four to five moments produces a price approximation that is very close to the Monte-Carlo benchmark

We calibrate the model consecutively to one month of daily data from February, March, and April 2015. Table \ref{table:abs_error} shows the absolute pricing errors in the second, third, and fifth column, respectively. Considering the relatively small number of parameters, the fit is remarkably good. Dividend futures have a mean absolute relative error less than 1\% with few exceptions. The mean absolute error of the swap rates is in the order of basis points for all tenors and all three months. The model only contains two volatility parameters ($\sigma^I$ and $\sigma^D$), but nonetheless produces a relatively good fit with option prices on average. The Eurostoxx 50 index level is matched almost perfectly, thanks to the relatively large weight in the weighted least-squares regression to filter out the latent factors. The fourth and sixth column of Table \ref{table:abs_error} show out-of-sample pricing errors. Specifically, in the fourth (sixth) column we compute the pricing errors in March (April) using the parameters calibrated on February (March) data. The only degrees of freedom in this out-of-sample exercise are the values of the latent factors, which we filter out as explained before. The loss in pricing accuracy out-of-sample is modest, which speaks for the robustness of the model.

Table \ref{table:params} shows the calibrated parameters. The parameters are comparable for the three calibration months, which is in line with the good out-of-sample performance. The parameter $\gamma$, which is the yield of the zero-coupon bond with infinite maturity, is decreasing in the subsequent calibrations, reflecting the decrease in interest rates over the sample period. The parameter $\beta$, which is the asymptotic risk-neutral expected growth rate of the dividends, is always substantially lower than $\gamma$, as required for the stock price to be finite. The term structure of dividend futures is downward sloping over the entire sample period. This is reflected in the calibration by a small value for $\theta^D$, which is the long-term mean of the process $X^D_{1t}$ driving the dividends. Remarkably, $\rho$ is positive for all three months, close to the upper bound of one. In view of \eqref{eq:corr_D_r}, this indicates a highly negative correlation between interest rates and dividends. This negative correlation is a central ingredient in our model, since it increases the volatility of the stock price relative to the dividends and interest rates. This allows to reconcile the relatively large implied volatility of stock options with the relatively small implied volatility of dividend options. From Figure \ref{fig:market_div_eq_opt} we can see that the difference between the dividend and stock option implied volatility was smaller in March than in February and April. This translates in a smaller calibrated $\rho$ in March compared to February and April. 

In Figure \ref{fig:market_fit} we use the February parameters to plot the model prices together with the market prices over the full sample period. The February to March (March to April) regions of the plots are therefore a visualization of the second and fourth (fourth and sixth) column in Table \ref{table:abs_error}. The goodness of fit deteriorates as we move away from the calibration window, which is to be expected. Note that the model is capable of capturing the level of implied volatilities of the stock and dividend options, but it fails to capture the variation over time. This is caused by the volatility structure of the model, where the relative volatility of the dividend factor $X_{1t}^D$ is constant. Enriching the model specification with more factors can help to address this problem, however we leave this for future research. Figure \ref{fig:state} plots the filtered values of $D_t$ and $r_t$ using the parameters calibrated on February data. The plot looks similar when using the March or April parameters.

Figure \ref{fig:duration} plots the stock duration using the February parameters. The stock duration is quite stable over time with an average around 23 years. \cite{dechow2004implied} and \cite{weber2018cash} construct a stock duration measure based on balance sheet data and find an average duration of approximately 15 and 19 years, respectively, for a large cross-section of stocks. The plot looks similar when using the March or April parameters.
%

Table \ref{tabel:computation_time} contains computation times for calculating option prices. The bulk of the computation times is due to the computation of the moments of $g(X_T)$ in \eqref{eq:scalar_derivative_payoff}. The number of stochastic factors that drive a derivative's payoff and the degree of moments that have to be matched therefore strongly affect the computation time. We observe that all timings of the maximum entropy method are well below the time it took to run the benchmark Monte-Carlo simulation. The pricing of swaptions is much faster than the pricing of dividend and stock options, especially as the number of moments increases. This is because the discounted swaption payoff only depends on on the 2-dimensional process $(X_{0t}^I,X_{1t}^I)^\top$, while the discounted payoff of the dividend and stock option depends on the entire 4-dimensional process $X_t=(X_{0t}^I,X_{1t}^I,X_{0t}^D,X_{1t}^D)^\top$. In addition, the discounted payoff of the dividend and stock option is quadratic in the factors. Therefore, in order to compute moments up to degree $N$ of the discounted payoff, we need to compute moments up to degree $2N$ of the factors. The computation of the dividend option is further complicated by the path-dependent nature of its payoff. Indeed, the dividend option payoff depends on the realization of the factors at $T_1$ \emph{and} $T_2$. In order to compute the moments of $\zeta_{T_2}(C_{T_2}-C_{T_1})$, we have to apply the moment formula twice. Hence, it involves computing a matrix exponential twice, which causes an additional computation time compared to the stock option.

\section{Extensions}\label{section:extensions}
\subsection{Seasonality}\label{section:seasonality}
It is well known that some stock markets exhibit a strongly seasonal pattern in the payment of dividends. For example, Figure \ref{fig:seasonality} shows that the constituents of the Euro Stoxx 50 pay a large fraction of their dividends between April and June each year.\footnote{See e.g.\ \cite{marchioro2016seasonality} for a study of dividend seasonality in other markets.} The easiest way to incorporate seasonality in our framework is to introduce a deterministic function of time $\delta(t)$ and redefine the cumulative dividend process as:
\begin{align}
C_t=\int_0^t\delta(s)\,\dd s + \e^{\beta t}p^\top H_1(X_t).
\label{eq:additive_seasonality}
\end{align}
The function $\delta(t)$ therefore adds a deterministic shift to the instantaneous dividend rate:
\begin{equation}
D_t=\delta(t)+\e^{\beta t}p^\top (\beta \mathrm{Id}+G_1)H_1(X_t).
\end{equation}
In addition to incorporating seasonality, $\delta(t)$ can also be chosen such that the observed dividend futures prices are perfectly matched. In Appendix \ref{section:appendix_bootstrap} we show how the bootstrapping method of \cite{filipovic2016exact} can be used to find such a function.  We do not lose any tractability with the specification in  \eqref{eq:additive_seasonality}, since the moments of $C_{T_2}-C_{T_1}$ can still easily be computed.

Alternatively, we could also introduce time dependence in the specification of $X_t$. Doing so in general comes at the cost of losing tractability, because we leave the class of polynomial jump-diffusions. However, it is possible to introduce a specific type of time dependence such that we \emph{do} stay in the class of polynomial jump-diffusions. Define $\Gamma(t)$ as a vector of sine and cosine functions whose frequencies are integer multiples of $2\pi$ (so that they all have period one)
\begin{equation*}
\Gamma(t)=\begin{pmatrix}
\sin(2\pi t)\\ \cos(2\pi t)\\\vdots \\ \sin(2\pi K t)\\ \cos(2\pi Kt)
\end{pmatrix} \in\R^{2K}, \quad K\in\N, \quad t\ge 0.
\end{equation*}
The superposition
\[z_0+z^\top \Gamma(t), \quad (z_0,z)\in\R^{1+2K},\]
is a flexible function for modeling annually repeating cycles and is a standard choice for pricing commodity derivatives (see e.g.\ \cite{sorensen2002modeling}). In fact, from Fourier analysis we know that any smooth periodic function can be expressed as a sum of sine and cosine waves. Remark now that $\Gamma(t)$ is the solution of the following linear ordinary differential equation
\[
\dd \Gamma(t)=
\mathrm{blkdiag}\left(
\begin{pmatrix}
0&2\pi\\
-2\pi& 0
\end{pmatrix},
\ldots,
\begin{pmatrix}
0&2\pi K\\
-2\pi K& 0
\end{pmatrix}
\right)
\Gamma(t)
\dd t.
\]
The function $\Gamma(t)$ can therefore be seen as a (deterministic) process of the form in \eqref{eq:factor_process} and can be added to the factor process.  For example, the specification for $(X_{0t}^D,X_{1t}^D)$ in \eqref{eq:parsimonious_model} could be replaced by
\[
\left\{
\def\arraystretch{1.2}
\begin{array}{llll}
\dd X_{0t}^D &=\kappa_0^D \left(X^D_{1t}-X_{0t}^D\right)\,\dd t\\
\dd X^D_{1t}&=\kappa^D_1(z_0+z^\top \Gamma(t)-X^D_{1t})\,\dd t&+&\sigma^D X^D_{1t}\left(\rho\,\dd B_{1t}+\sqrt{1-\rho^2}\,\dd B_{2t}\right)
\end{array}
\right.
,
\]
where the first factor mean-reverts around the second, and the second mean-reverts around a time-dependent mean. The process $X_t$ does not belong to the class of polynomial jump-diffusions, however the augmented process $(\Gamma(t),X_t)$ does.

In the calibration exercise in Section \ref{section:numerical}, we did not include any seasonal behavior in the dividends because the instruments used in the estimation are not directly affected by seasonality. Indeed, all the dividend derivatives used in the calibration reference the total amount of dividends paid in a full calendar year. The timing of the dividend payments within the year does therefore not play any role. In theory, the stock price should inherit the seasonal pattern from the dividend payments, since it drops by exactly the amount of dividends paid out. In practice, however, these price drops are obscured by the volatility of the stock price since the dividend payments typically represent only a small fraction of the total stock price. Dividend seasonality only plays a role for pricing claims on dividends realized over a time period different from an integer number of calendar years.

\subsection{Dividend forwards}
Dividend forwards, also known as dividend swaps, are the OTC equivalent of the exchange traded dividend futures. The buyer of a dividend forward receives at a future date $T_2$ the dividends realized over a certain time period $[T_1,T_2]$ against a fixed payment. Dividend forwards differ from dividend futures because they are not marked to market on a daily basis. The dividend forward price $D_{fwd}(t,T_1,T_2)$, $t\le T_1\le T_2$, is defined as the fixed payment that makes the forward have zero initial value
\begin{align*}
D_{fwd}(t,T_1,T_2)&=\frac{1}{P(t,T_2)}\frac{1}{\zeta_t}\E_t\left[\zeta_{T_2}(C_{T_2}-C_{T_1})\right]\\
&=D_{fut}(t,T_1,T_2)+\frac{\mathrm{Cov}_t\left[\zeta_{T_2},C_{T_2}-C_{T_1}\right]}{P(t,T_2)\zeta_t}.
\end{align*}
If interest rates and dividends are independent, then we have $D_{fwd}(t,T_1,T_2)=D_{fut}(t,T_1,T_2)$. However, if there is a positive (negative) dependence between interest rates and dividends, then there is a convexity adjustment and the dividend forward price will be smaller (larger) than the dividend futures price.
 The following proposition derives the dividend forward price in the polynomial framework.
\begin{proposition}\label{prop:dividend_swap}
The dividend forward price is given by
\begin{align*}
D_{fwd}(t,T_1,T_2)=\frac{\left(\e^{\beta T_2}w_2^\top\e^{G_2(T_2-t)}-\e^{\beta T_1}w_1^\top \e^{G_2(T_1-t)}\right) H_2(X_t)}{q^\top \e^{G_1(T_2-t)}H_1(X_t)},
\end{align*}
where $w_1,w_2\in\R^{N_2}$ are the unique coordinate vectors satisfying
\begin{align*}
w_1^\top H_2(x)=p^\top H_1(x) q^\top \e^{G_1(T_2-T_1)} H_1(x),\quad w_2^\top H_2(x)=p^\top H_1(x) q^\top H_1(x).
\end{align*}
\end{proposition}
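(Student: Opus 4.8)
The plan is to evaluate the numerator and denominator of the defining expression
$
D_{fwd}(t,T_1,T_2)=\frac{1}{P(t,T_2)\,\zeta_t}\,\E_t\!\left[\zeta_{T_2}(C_{T_2}-C_{T_1})\right]
$
separately, in each case reducing to a (quadratic) polynomial in the factor process and then applying the moment formula \eqref{eq:moment_formula}. For the denominator, using $\zeta_t P(t,T_2)=\E_t[\zeta_{T_2}]$ together with \eqref{eq:discount_factor} and \eqref{eq:moment_formula} gives $\zeta_t P(t,T_2)=\e^{-\gamma T_2}\,q^\top \e^{G_1(T_2-t)}H_1(X_t)$, which already matches the denominator in the claimed formula up to the scalar $\e^{-\gamma T_2}$.

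For the numerator I would split $\E_t[\zeta_{T_2}(C_{T_2}-C_{T_1})]=\E_t[\zeta_{T_2}C_{T_2}]-\E_t[\zeta_{T_2}C_{T_1}]$. The first term is handled exactly as in the proof of Proposition \ref{prop:stock_price}: by \eqref{eq:discount_factor} and \eqref{eq:cumul_dividend_spec}, $\zeta_{T_2}C_{T_2}=\e^{(\beta-\gamma)T_2}\,(q^\top H_1(X_{T_2}))(p^\top H_1(X_{T_2}))$ is a product of two affine functions, hence an element of $\mathrm{Pol}_2(E)$, so there is a unique coordinate vector $w_2\in\R^{N_2}$ with $w_2^\top H_2(x)=p^\top H_1(x)\,q^\top H_1(x)$ (uniqueness uses that $E$ has non-empty interior). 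Then \eqref{eq:moment_formula} gives $\E_t[\zeta_{T_2}C_{T_2}]=\e^{(\beta-\gamma)T_2}\,w_2^\top \e^{G_2(T_2-t)}H_2(X_t)$.

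The second term is the one requiring a bit of care, since $\zeta_{T_2}C_{T_1}$ depends on $X$ at both $T_1$ and $T_2$; I would condition first on $\Fcal_{T_1}$ via the tower property: $\E_t[\zeta_{T_2}C_{T_1}]=\E_t\big[C_{T_1}\,\E_{T_1}[\zeta_{T_2}]\big]$. By \eqref{eq:moment_formula}, $\E_{T_1}[\zeta_{T_2}]=\e^{-\gamma T_2}\,q^\top \e^{G_1(T_2-T_1)}H_1(X_{T_1})$, so $C_{T_1}\,\E_{T_1}[\zeta_{T_2}]=\e^{\beta T_1-\gamma T_2}\,(p^\top H_1(X_{T_1}))(q^\top \e^{G_1(T_2-T_1)}H_1(X_{T_1}))$, again a product of two affine functions of $X_{T_1}$ and hence in $\mathrm{Pol}_2(E)$; this defines the unique $w_1\in\R^{N_2}$ with $w_1^\top H_2(x)=p^\top H_1(x)\,q^\top \e^{G_1(T_2-T_1)}H_1(x)$, and \eqref{eq:moment_formula} yields $\E_t[\zeta_{T_2}C_{T_1}]=\e^{\beta T_1-\gamma T_2}\,w_1^\top \e^{G_2(T_1-t)}H_2(X_t)$.

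Assembling the two pieces, the numerator equals $\e^{-\gamma T_2}\big(\e^{\beta T_2}w_2^\top \e^{G_2(T_2-t)}-\e^{\beta T_1}w_1^\top \e^{G_2(T_1-t)}\big)H_2(X_t)$; dividing by $\zeta_t P(t,T_2)=\e^{-\gamma T_2}q^\top \e^{G_1(T_2-t)}H_1(X_t)$ cancels the common factor $\e^{-\gamma T_2}$ and produces exactly the stated expression. The only subtlety worth flagging is the correct invocation of the tower property for the path-dependent term $\E_t[\zeta_{T_2}C_{T_1}]$ and the observation that \eqref{eq:moment_formula} may be applied at the intermediate date $T_1$ (valid since it holds for all $t\le T$); the well-posedness of $w_1,w_2$ is the same argument used for $v$ in Proposition \ref{prop:stock_price}. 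I expect the bookkeeping of the exponential prefactors $\e^{\beta T_i}$, $\e^{-\gamma T_2}$ to be the most error-prone step, but there is no genuine analytic obstacle.
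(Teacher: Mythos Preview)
Your proposal is correct and follows essentially the same route as the paper's proof: split $\E_t[\zeta_{T_2}(C_{T_2}-C_{T_1})]$ into two terms, apply the moment formula \eqref{eq:moment_formula} directly to the $T_2$-term, handle the path-dependent $T_1$-term by first conditioning on $\Fcal_{T_1}$ (tower property) and then applying \eqref{eq:moment_formula} twice, and finally divide by $\zeta_t P(t,T_2)$ via \eqref{eq:price_bond}. Your remarks on the uniqueness of $w_1,w_2$ and the bookkeeping of the $\e^{\beta T_i},\e^{-\gamma T_2}$ prefactors are spot on and, if anything, slightly more explicit than the paper's own write-up.
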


\section{Conclusion}\label{section:conclusion}
We have introduced an integrated framework designed to jointly price the term structures of dividends and interest rates. The uncertainty in the economy is modeled with a multivariate polynomial jump-diffusion. The model is tractable because we can calculate all conditional moments of the factor process in closed form. In particular, we have derived closed form formulas for prices of bonds, dividend futures, and the dividend paying stock. Option prices are obtained by integrating the discounted payoff function with respect to a moment matched density function that maximizes the Boltzmann-Shannon entropy. We have introduced the LJD model, characterized by a martingale part that loads linearly on the factors. The LJD model allows for a flexible dependence structure between the factors, which offers a valuable alternative to non-negative affine jump-diffusion models. We have assumed that dividends are paid out continuously and ignored the possibility of default. These assumptions are justified when considering derivatives on a stock index, but become questionable for derivatives on a single stock. An interesting future research direction is therefore to extend our framework with discrete dividend payments and default risk. 

\clearpage
\appendix

\section{Bootstrapping an additive seasonality function}
\label{section:appendix_bootstrap}
In this section we explain how to bootstrap a smooth curve $T\mapsto f_t(T)$ of (unobserved) futures prices corresponding to the instantaneous dividend rate $D_T$. The curve should perfectly reproduce observed dividend futures prices and in addition incorporate a seasonality effect. Once we have this function, we define the function $\delta(T)$ as
\[
\delta(T)=f_t(T)-p^\top (\beta \mathrm{Id}+G_1)\E_t[H_1(X_T)],\quad T\ge t,
\]
so that the specification in \eqref{eq:additive_seasonality} perfectly reproduces observed futures contracts and incorporates seasonality.

Suppose for notational simplicity that today is time $0$ and we observe the futures prices $F_i$ of the dividends realized over one calendar year $[i-1,i]$, $i=1,\ldots,I$. Divide the calendar year in $J\ge 1$ buckets and assign a seasonal weight $w_j\ge 0$ to each bucket, with $w_1+\cdots+w_J=1$. These seasonal weights can for example be estimated from a time series of dividend payments. We search for the twice continuously differentiable curve $f_0$ that has maximal smoothness subject to the pricing and seasonality constraints:
\begin{equation*}
\begin{array}{ll}
\displaystyle
\min_{f_0\in C^2(\R)} &\displaystyle f_0(0)^2+f'_0(0)^2+\int^I_0 f''_0(u)^2\,\dd u\\
\multicolumn{1}{c}{\mathrm{s.t.}	}		&\displaystyle \int_{i-1+\frac{j-1}{J}}^{i-1+\frac{j}{J}} f_0(u)\,\dd u=w_jF_i,\quad i=1,\ldots,I,\quad j=1,\ldots,J.				
\end{array}
\end{equation*}
This can be cast in an appropriate Hilbert space as a convex variational optimization problem with linear constraints. In particular, it has a unique solution that can be solved in closed form using similar techniques as presented in \cite{filipovic2016exact}. By discretizing the optimization problem, a non-negativity constraint on $f$ can be added as well.

\section{Proofs}
This section contains all the proofs of the propositions in the paper.

\subsection{Proof of Proposition \ref{prop:dividend_rate}}
Using the moment formula \eqref{eq:moment_formula} we have for $t\le T$
\begin{equation*}
\E_t[C_T]=\e^{\beta T} p^\top \e^{G_1(T-t)}H_1(X_t).
\end{equation*}
Differentiating with respect to $T$ gives
\begin{equation*}
\frac{\dd \E_t[C_T]}{\dd T}=\beta \e^{\beta T} p^\top \e^{G_1(T-t)}H_1(X_t)+\e^{\beta T} p^\top G_1 \e^{G_1(T-t)}H_1(X_t).
\end{equation*}
The result now follows from
\begin{equation*}
D_t=\left.\frac{\dd \E_t[C_T]}{\dd T}\right\rvert_{T=t}.
\end{equation*}
%

\subsection{Proof of Proposition \ref{prop:stock_price}}\label{section:appendix_stock_price}
Plugging in the specifications for $\zeta_t$ and $D_t$ in \eqref{eq:fundamental_stock} gives:
\begin{align*}
S_t^\ast&=\frac{1}{\zeta_t}\int_t^\infty  \e^{-(\gamma-\beta) s}\E_t\left[ p^\top (\beta \mathrm{Id}+G_1) H_1(x)H_1(X_s)\,q^\top H_1(X_s)\right]\,\dd s.
\end{align*}
Since $X_t$ is a polynomial process, we can find a closed form expression for the expectation inside the integral:
\[
\E_t\left[ p^\top (\beta \mathrm{Id}+G_1) H_1(X_s)\, q^\top H_1(X_s)\right]=v^{\top}\e^{G_2(s-t)}\,H_2(X_t).
\]
The fundamental stock price therefore becomes:
\begin{align*}
S^\ast_t&=\frac{\e^{\beta t}v^{\top}}{\vphantom{\big|} q^\top H_1(X_t)}\int_t^\infty \e^{-(\gamma-\beta)(s-t)}\e^{G_2(s-t)}\,\dd s\,H_2(X_t)\\
&=\frac{\e^{\beta t}v^{\top}}{q^\top H_1(X_t)}\left(G_2-(\gamma-\beta) \,\mathrm{Id}\right)^{-1}\exp\left\{\left(G_2-(\gamma-\beta) \,\mathrm{Id}\right)(s-t)\right\}\Bigg\vert^{s=\infty}_{s=t}\,H_2(X_t)\\
&=\frac{\e^{\beta t}v^{\top}}{q^\top H_1(X_t)}\left((\gamma-\beta) \,\mathrm{Id}-G_2\right)^{-1}\,H_2(X_t)\\
&<\infty,
\end{align*}
where we have used the fact that the eigenvalues of the matrix $G_2-(\gamma-\beta)\,\mathrm{Id}$ have negative real parts.

\subsection{Proof of Proposition \ref{prop:stock_price_formal}}
The market is arbitrage free if and only if the deflated gains process
\begin{equation}
G_t=\zeta_tS_t +\int_0^t \zeta_s D_s\,\dd s
\label{eq:gains_process}
\end{equation}
is a non-negative local martingale.

If $S_t$ is of the form in \eqref{eq:stock_price_formal}, then we have
\[
G_t=\E_t\left[\int_0^\infty \zeta_s D_s\,\dd s\right]+L_t,
\]
which is clearly a non-negative local martingale and therefore the market is arbitrage free.

Conversely, suppose that the market is arbitrage free and hence \eqref{eq:gains_process} holds. As a direct consequence, the process
\begin{align*}
\zeta_tS_t-\zeta_tS_t^\ast=G_t-\E_t\left[\int_0^\infty \zeta_s D_s\right]
\end{align*}
must be a local martingale. To show nonnegativity, note that a local martingale bounded from below is a supermartingale, so that we have for all $T\ge t$
\begin{align*}
\zeta_tS_t-\zeta_tS_t^\ast &\ge \E_t\left[G_T -\int_0^\infty \zeta_s D_s\right]\\
&=\E_t\left[\zeta_T S_T -\int_T^\infty \zeta_sD_s\,\dd s\right]\\
&\ge \E_t\left[-\int_T^\infty \zeta_sD_s\,\dd s\right]
\xrightarrow{T\to\infty} 0,
\end{align*}
where we have used the limited liability of the stock in the last inequality.

\subsection{Proof of Proposition \ref{prop:duration}}
Similarly as in the proof of Proposition \ref{prop:stock_price} we get
\begin{align*}
\int_t^\infty (s-t)\,\E_t[\zeta_s D_s]\,\dd s
&=v^\top\int_t^\infty (s-t)\,\e^{(\beta-\gamma)s}\e^{G_2(s-t)} \,\dd s H_2(X_t)\\
&=\e^{(\beta-\gamma)t}v^\top\int_t^\infty (s-t)\,\e^{[G_2-(\gamma-\beta)\mathrm{Id}](s-t)} \,\dd s H_2(X_t).
\end{align*}
Applying integration by parts gives
\begin{align*}
\int_t^\infty (s-t)\,\E_t[\zeta_s D_s]\,\dd s&=
\e^{(\beta-\gamma)t}v^\top [(\gamma-\beta)\mathrm{Id}-G_2]^{-1}
\int_t^\infty \e^{[G_2-(\gamma-\beta)\mathrm{Id}](s-t)} \,\dd sH_2(X_t)\\
&=\e^{(\beta-\gamma)t}v^\top [(\gamma-\beta)\mathrm{Id}-G_2]^{-2}
 H_2(X_t)\\
 &=\e^{(\beta-\gamma)t}w^\top [(\gamma-\beta)\mathrm{Id}-G_2]^{-1}
 H_2(X_t).
\end{align*}
The result now follows from \eqref{eq:stock_price} and \eqref{eq:duration}.

\subsection{Proof of Proposition \ref{prop:posSol}}\label{section:appendix_posSol}
We start by showing that there exists a unique strong solution $X_t$ to \eqref{eq:LJD} with values in $(0,\infty)^{d}$. Due to the global Lipschitz continuity of the coefficients, the SDE in \eqref{eq:LJD} has a unique strong solution in $\R^d$ for every $X_0\in\R^d$, see Theorem III.2.32 in \cite{jacod2003limit}. It remains to show that $X_t$ is $(0,\infty)^d$-valued for all $t\ge 0$ if $X_0 \in (0,\infty)^d$. First, we prove the statement for the diffusive case.
\begin{lemma}
Consider the SDE
\begin{align}
\dd X_t = \kappa(\theta- X_t)\,\dd t+\mathrm{diag}(X_t)\Sigma \,\dd W_t,
\label{eq:posSol_diffusive}
\end{align}
for some $d$-dimensional Brownian motion $W_t$ and $\kappa,\theta,\Sigma$ as assumed in Proposition \ref{prop:posSol}. If $X_0 \in (0,\infty)^d$, then $X_t \in (0,\infty)^d$ for all $t\ge 0$.
\label{lemma:posSol_diffusive}
\end{lemma}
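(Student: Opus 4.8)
\textbf{Proof plan for Lemma \ref{lemma:posSol_diffusive}.} The plan is to show that each coordinate process $X_{it}$ cannot reach zero from a strictly positive starting point. The key structural observation is that the diffusion coefficient of the $i$-th coordinate, namely the $i$-th row of $\mathrm{diag}(X_t)\Sigma$, vanishes whenever $X_{it}=0$ (since $\Sigma$ is lower triangular, only $X_{i1},\ldots,X_{ii}$ enter that row, but the relevant point is that the whole row is $X_{it}$ times a constant row vector). Hence on the boundary $\{x_i=0\}$ the $i$-th coordinate has purely drift dynamics with drift $(\kappa(\theta-X_t))_i = (\kappa\theta)_i - \sum_j \kappa_{ij}X_{jt}$. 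Under the assumptions $(\kappa\theta)_i\ge 0$ and $\kappa_{ij}\le 0$ for $j\ne i$, at a point where $X_{it}=0$ but $X_{jt}\ge 0$ for all $j$, this drift equals $(\kappa\theta)_i - \sum_{j\ne i}\kappa_{ij}X_{jt} \ge 0$. So the drift pushes the coordinate (weakly) back into the positive orthant exactly on the face where it is needed. This is the classical invariance criterion for the positive orthant.

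First I would make this rigorous by a localization/comparison argument rather than quoting a black-box invariance theorem. Fix $i$ and, for $\varepsilon>0$, let $\tau_\varepsilon = \inf\{t\ge 0 : X_{it}\le \varepsilon\}$ and $\tau_0 = \inf\{t\ge 0: X_{it} = 0\}$ (with $\tau_0 = \lim_{\varepsilon\downarrow 0}\tau_\varepsilon$). Consider $\log X_{it}$ up to $\tau_\varepsilon$, where it is well-defined. By It\^o's formula,
\begin{align*}
\dd \log X_{it} = \frac{(\kappa\theta)_i - \sum_j \kappa_{ij} X_{jt}}{X_{it}}\,\dd t - \frac{1}{2}(\Sigma\Sigma^\top)_{ii}\,\dd t + (\text{row }i\text{ of }\Sigma)\,\dd W_t,
\end{align*}
using that $\mathrm{diag}(X_t)\Sigma$ has $i$-th row $X_{it}\cdot(\text{row }i\text{ of }\Sigma)$, so the quadratic-variation term is $X_{it}^2(\Sigma\Sigma^\top)_{ii}$ and dividing by $X_{it}^2$ gives the constant $(\Sigma\Sigma^\top)_{ii}$. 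The drift term $\big((\kappa\theta)_i - \sum_{j\ne i}\kappa_{ij}X_{jt}\big)/X_{it} - \kappa_{ii}$ is bounded below by $-\kappa_{ii}$ on the set where all $X_{jt}\ge 0$, because the numerator is nonnegative there. The only subtlety is that I need all coordinates to be nonnegative simultaneously; I would handle this by running the argument for all $i$ at once up to the first time $\sigma_\varepsilon$ that \emph{some} coordinate drops to $\varepsilon$, on which event all coordinates are still $\ge \varepsilon>0$ prior to $\sigma_\varepsilon$.

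Putting this together: on $[0,\sigma_\varepsilon)$ every $X_{jt}>0$, so for each $i$ the process $\log X_{it}$ is bounded below by $\log X_{i0} - \kappa_{ii}t$ plus a true martingale (the stochastic integral, which has bounded integrand). If $\sigma_\varepsilon\to \sigma_0 := \inf\{t : \min_i X_{it} = 0\} < \infty$ with positive probability, then along that sequence some $\log X_{it}\to -\infty$ at a finite time, contradicting the lower bound $\log X_{i0} - \kappa_{ii}\sigma_0 + (\text{finite martingale value})$; more carefully, I would apply Fatou / the supermartingale convergence theorem to $\log X_{it} + \kappa_{ii}t$ on $[0,\sigma_0)$ to conclude it has a finite limit a.s., hence $X_{i,\sigma_0-}>0$ for every $i$, so $\sigma_0$ cannot be finite. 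Therefore $X_t\in(0,\infty)^d$ for all $t\ge 0$ a.s. The main obstacle is the bookkeeping around the stopping times --- ensuring that when we bound the drift of $\log X_{it}$ from below we are genuinely on a set where \emph{all} other coordinates are nonnegative, which is why the argument must be run jointly over $i$ up to the first exit of the whole vector from $(\varepsilon,\infty)^d$, and then $\varepsilon$ sent to $0$.
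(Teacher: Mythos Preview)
Your approach is correct and genuinely different from the paper's. The paper does not work with $\log X_{it}$ at all; instead it introduces an auxiliary process $Z_t$ solving
\[
\dd Z_t = -\mathrm{diag}(\mathrm{diag}(\kappa))Z_t^+\,\dd t + \mathrm{diag}(Z_t)\Sigma\,\dd W_t,
\]
which is an explicit $(0,\infty)^d$-valued geometric Brownian motion, and then invokes a multidimensional comparison theorem (Geiss--Manthey) to obtain $X_t\ge Z_t>0$ componentwise, after first replacing $X_t$ by $X_t^+$ in the drift so that the comparison hypothesis holds on all of $\R^d$. It is worth noting that the two arguments are secretly the same inequality: the paper's $Z_{it}$ equals $X_{i0}\exp\{(-\kappa_{ii}-\tfrac12(\Sigma\Sigma^\top)_{ii})t+(\text{row }i\text{ of }\Sigma)W_t\}$, which is exactly the exponential of your pointwise lower bound for $\log X_{it}$. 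Your It\^o computation therefore reproduces the comparison $X_{it}\ge Z_{it}$ by hand, trading the black-box comparison theorem for an explicit stopping-time argument. The paper's route is shorter once the comparison theorem is granted, and cleanly sidesteps the bookkeeping around $\sigma_\varepsilon$ that you flag as the main obstacle; your route is more self-contained and avoids citing a somewhat specialized multidimensional comparison result.

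Two minor slips in your write-up, neither fatal: the constant lower bound on the drift of $\log X_{it}$ should be $-\kappa_{ii}-\tfrac12(\Sigma\Sigma^\top)_{ii}$, not $-\kappa_{ii}$ (you dropped the It\^o correction when summarizing); and the relevant tool at the end is not supermartingale convergence but simply the pointwise inequality $\log X_{it}\ge \log X_{i0}+c_i t + M_t^{(i)}$ together with continuity of the scaled Brownian motion $M^{(i)}$ at the finite time $\sigma_0$, which you also state and which is the clean way to close the argument.
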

\begin{proof}
Replace $X_t$ in the drift of \eqref{eq:posSol_diffusive} by $X_t^+$ componentwise and consider the SDE
\begin{align}
\dd Y_t = \kappa(\theta - Y_t^+)\,\dd t + \mathrm{diag}(Y_t)\Sigma \,\dd W_t,
\label{eq:posSol_diffusive_cap}
\end{align}
with $Y_0=X_0\in (0,\infty)^d$. The function $y\mapsto y^+$ componentwise is still Lipschitz continuous, so that there exists a unique solution $Y_t$ to \eqref{eq:posSol_diffusive_cap}. Now consider the SDE
\begin{align}
\dd Z_t = -\mathrm{diag}(\mathrm{diag}(\kappa)) Z_t^+ \,\dd t + \mathrm{diag}(Z_t) \Sigma\, \dd W_t,
\label{eq:posSol_GBM}
\end{align}
with $Z_0=X_0\in (0,\infty)^d$. Its unique solution is the $(0,\infty)^d$-valued process given by
\begin{align*}
Z_t=Z_0\exp\left\{\left(-\mathrm{diag}(\kappa)-\frac{1}{2}\mathrm{diag}(\Sigma\Sigma^\top)\right)t+\Sigma\, W_t \right\}.
\end{align*}
By assumption, we have that the drift function of \eqref{eq:posSol_diffusive_cap} is always greater than or equal to the drift function of \eqref{eq:posSol_GBM}:
\[
\kappa\theta -\kappa x^+ \ge-\mathrm{diag}(\mathrm{diag}(\kappa)) x^+,\quad \forall x\in\R^d.\]
By the comparison theorem from \cite[Theorem 1.2]{geiss1994comparison} we have almost surely
\begin{equation*}
Y_t\ge Z_t,\quad t\ge 0.
\end{equation*}
Hence, $Y_t\in (0,\infty)^d$  and therefore $Y_t$ also solves the SDE \eqref{eq:posSol_diffusive}. By uniqueness we conclude that $X_t= Y_t$ for all $t$, which proves the claim.
\end{proof}

Define $\tau_i$ as the $i$th jump time of $N_t$ and $\tau_0=0$. We argue by induction and assume that $X_{\tau_i}>0$ for some $i=0,1,\dots$. Since the process $X_t$ is right-continuous, we have the following diffusive dynamics for the process $X_t^{(\tau_i)}=X_{t+\tau_i}$ on the interval $[0,\tau_{i+1}-\tau_i)$
\begin{equation*}
\dd X_t^{(\tau_i)} =\left(\kappa\theta+\left(-\kappa-\xi \mathrm{diag}\left(\int_\Scal z \,\dd F(\dd z)\right)\right)X_t^{(\tau_i)}\right)\,\dd t+\mathrm{diag}(X_{t}^{(\tau_i)})\Sigma \,\dd B^{(\tau_i)}_{t},
\end{equation*}
with $X_0^{(\tau_i)}=X_{\tau_i}$ and $B^{(\tau_i)}_{t}=B_{\tau_i+t}-B_{\tau_i}$.
The stopping time $\tau_i$ is a.s.\ finite and therefore the process $B^{(\tau_i)}_{t}$ defines a $d$-dimensional Brownian motion with respect to its natural filtration, see Theorem 6.16 in \cite{karatzas1991brownian}. By Lemma \ref{lemma:posSol_diffusive} we have $X_t^{(\tau_i)}\in(0,\infty)^d$ for all $t\in [0,\tau_{i+1}-\tau_i)$. As a consequence, we have $X_t\in(0,\infty)^d$ for all $t\in [\tau_i,\tau_{i+1})$. The jump size $X_{\tau_{i+1}}-X_{\tau_{i+1}-}$ at time $\tau_{i+1}$ satisfies
\begin{equation*}
X_{\tau_{i+1}}-X_{\tau_{i+1}-}=\mathrm{diag}(X_{\tau_{i+1}-})Z_{i+1}>-X_{\tau_{i+1}-},
\end{equation*}
where the $Z_{i+1}$ are i.i.d.\ random variables with distribution $F(\,\dd z)$.
Rearranging terms gives $X_{\tau_{i+1}}\in(0,\infty)^d$. By induction we conclude that $X_t\in(0,\infty)^d$ for $t\in [0,\tau_i)$, $i\in\N$. The claim now follows because $\tau_i \to\infty$ for $i\to\infty$ a.s.

Next, we prove that $X_t$ is a polynomial jump-diffusion. The action of the generator of $X_t$ on a $C^2$ function $f\colon \R^d\to\R$ is given by
\begin{align}
\Gcal f(x)=&\frac{1}{2}\mathrm{tr}\left(\mathrm{diag}(x)\Sigma\Sigma^\top\mathrm{diag}(x)\nabla^2f(x)\right)+\nabla f(x)^\top\kappa(\theta-x)\nonumber\\
&+\xi\left(\int_\Scal f(x+\mathrm{diag}(x)z)\,F(\dd z)- f(x)-\nabla f(x)^\top \mathrm{diag}(x)\int_\Scal z\, F(\dd z)\right),\label{eq:proof_pol_jump_diff}
\end{align}
where $\Scal$ denotes the support of $F$ and we assume that $f$ is such that the integrals are finite. Now suppose that $f\in \mathrm{Pol}_n(\R^d)$ and assume without loss of generality that $f$ is a monomial with $f(x)=x^\alpha=x_1^{\alpha_1}\cdots x_d^{\alpha_d}$, $|\alpha|=n$. We now apply the generator to this function. It follows immediately that the first two terms in \eqref{eq:proof_pol_jump_diff} are again a polynomial of degree $n$ or less. Indeed, the gradient (hessian) in the second (first) term lowers the degree by one (two), while the remaining factors augment the degree by at most one (two). The third term in \eqref{eq:proof_pol_jump_diff} becomes (we slightly abuse the notation $\alpha$ to represent both a multi-index and a vector):
\begin{align}
&\xi \left(x^\alpha \int_\Scal \prod_{j=1}^d (1+z_j)^{\alpha_j}\,F(\dd z)- x^\alpha- x^\alpha\alpha^\top\int_\Scal z\, F(\dd z)\right)\nonumber\\
=&\xi x^\alpha\int_\Scal \left(\e^{\alpha^\top \log(1+z)}- 1- \alpha^\top z\right)\, F(\dd z),\label{eq:appendix_generator_jump}
\end{align}
where the logarithm is applied componentwise.
Hence, we conclude that $\Gcal$ maps polynomials to polynomials of the same degree or less.

\subsection{Proof of Proposition \ref{prop:nonegative_dividend_rate}}
This proof is similar to the one of Theorem 5 in \cite{filipovic2014linear}. From \eqref{eq:dividend_spec} we have that $D_t\ge 0$ if and only if
\begin{equation}
\beta \ge \displaystyle \sup_{x\in (0,\infty)^d}-\frac{p^\top G_1 H_1(x)}{p^\top H_1(x)},
\end{equation}
provided it is finite.
Using \eqref{eq:conditional_exp} we have
\begin{equation}
-\frac{p^\top G_1 H_1(x)}{p^\top H_1(x)}=\frac{-\tilde{p}^\top\kappa\theta+\sum_{j=1}^d \tilde{p}^\top\kappa_j x_j}{p_0+\sum_{j=1}^k p_j x_j}.
\label{eq:lower_bound_div}
\end{equation}
Using the assumption $\kappa_{ij}\le 0$ for $i\neq j$ (cfr., Proposition \ref{prop:posSol}), we have for all $j>k$ that
\begin{equation}
\tilde{p}^\top\kappa_j=\sum_{i=1}^d p_i\kappa_{ij}
=\sum_{i=1}^k p_i\kappa_{ij}\le 0.
\label{eq:lower_bound_div2}
\end{equation}
Combining \eqref{eq:lower_bound_div} with \eqref{eq:lower_bound_div2} gives
\begin{equation}
\sup_{x\in (0,\infty)^d} \frac{-\tilde{p}^\top\kappa\theta+\sum_{j=1}^d \tilde{p}^\top\kappa_j x_j}{p_0+\sum_{j=1}^k p_j x_j}=\sup_{x\in (0,\infty)^k} \frac{-\tilde{p}^\top\kappa\theta+\sum_{j=1}^k \tilde{p}^\top\kappa_j x_j}{p_0+\sum_{j=1}^k p_j x_j}.
\label{eq:convex_comb}
\end{equation}
If $p_0>0$, then the fraction on the right-hand side of \eqref{eq:convex_comb} can be seen as a convex combination of
\[
\left\{-\dfrac{\tilde{p}^\top\kappa\theta}{p_0},\dfrac{\tilde{p}^\top\kappa_1}{p_1},\ldots,\dfrac{\tilde{p}^\top\kappa_{k}}{p_{k}}\right\},
\]
with coefficients $p_0,p_1x_1,\ldots,p_kx_k$. As a consequence, we have in this case
\[
 \displaystyle \sup_{x\in (0,\infty)^d}-\frac{p^\top G_1 H_1(x)}{p^\top H_1(x)}=\max \left\{-\dfrac{\tilde{p}^\top\kappa\theta}{p_0},\dfrac{\tilde{p}^\top\kappa_1}{p_1},\ldots,\dfrac{\tilde{p}^\top\kappa_{k}}{p_{k}}\right\}.
\]
If $p_0=0$, then using the assumption $\kappa\theta\ge 0$ (cfr., Proposition \ref{prop:posSol}) we get
\begin{align*}
\displaystyle \sup_{x\in (0,\infty)^d}-\frac{p^\top G_1 H_1(x)}{p^\top H_1(x)}
&=\displaystyle \sup_{x\in (0,\infty)^k}\frac{-\tilde{p}^\top\kappa\theta+\sum_{j=1}^k \tilde{p}^\top\kappa_j x_j}{\sum_{j=1}^k p_j x_j}\\
&= \displaystyle \sup_{x\in (0,\infty)^k}\frac{\sum_{j=1}^k \tilde{p}^\top\kappa_j x_j}{\sum_{j=1}^k p_j x_j}\\
&=\max \left\{\dfrac{\tilde{p}^\top\kappa_1}{p_1},\ldots,\dfrac{\tilde{p}^\top\kappa_{k}}{p_{k}}\right\}.
\end{align*}

\subsection{Proof of Proposition \ref{prop:LJD_eigenvalues}}
Suppose first that $\kappa$ is lower triangular.
In order to get a specific idea what the matrix $G_2$ looks like, we start by fixing a monomial basis for $\mathrm{Pol}_2(\R^d)$ using the graded lexicographic ordering of monomials:
\begin{align}
H_2(x)=(1,x_1,\ldots,x_d,x_1^2,x_1x_2,\ldots, x_1x_d,x_2^2,x_2x_3,\ldots,x_d^2)^\top,\quad x\in \R^d.\label{eq:ordered_monomials}
\end{align}
It follows by inspection of \eqref{eq:proof_pol_jump_diff} and \eqref{eq:appendix_generator_jump} that, thanks to the triangular structure of $\kappa$, the matrix $G_2$ is lower triangular with respect to this basis. Indeed, the first and third term in \eqref{eq:proof_pol_jump_diff} only contribute to the diagonal elements of $G_2$, while the second term contributes to the lower triangular part (including the diagonal). The eigenvalues of $G_2$ are therefore given by its diagonal elements.

Each element in the monomial basis can be expressed as as $f(x)=x_1^{\alpha_1}\cdots x_d^{\alpha_d}$, for some $\alpha\in\N^d$ with $\sum_{i=1}^d \alpha_i\le 2$. In order to find the diagonal elements of $G_2$, we need to find the coefficient of the polynomial $\Gcal f(x)$ associated with the basis element $f(x)$. It follows from \eqref{eq:proof_pol_jump_diff} and \eqref{eq:appendix_generator_jump} that this coefficient is given by
\begin{align*}
&-\sum_{i=1}^d\kappa_{ii}\alpha_i +\frac{1}{2}\sum_{i<j}(\Sigma\Sigma^\top)_{ij}\alpha_i\alpha_j+\sum_{i=1}^d (\Sigma\Sigma^\top)_{ii}\alpha_i(\alpha_i-1)\\
&+\xi\,\int_\Scal \left(\e^{\alpha^\top \log(1+z)}- 1- \alpha^\top z\right)\, F(\dd z).
\end{align*}
The restriction $\sum_{i=1}^d \alpha_i\le 2$ allows to summarize all diagonal elements, and hence the eigenvalues, of $G_2$ as follows
\begin{gather*}
0, -\kappa_{11},\dots, -\kappa_{dd}, \\
-\kappa_{ii}-\kappa_{jj}+(\Sigma\Sigma^\top)_{ij}+\xi\,\int_\Scal z_iz_j \, F(\dd z),\quad 1\le i,j\le d.
\end{gather*}
Note that a change of basis will lead to a different matrix $G_2$, however its eigenvalues are unaffected. The choice of the basis in \eqref{eq:ordered_monomials} is therefore without loss of generality.

If $\kappa$ is upper triangular, we consider a different ordering for the monomial basis:
\begin{align*}
H_2(x)=(1,x_d,\ldots,x_1,x_d^2,x_d x_{d-1},\ldots, x_d x_1,x_{d-1}^2,x_{d-1} x_{d-2},\ldots,x_1^2)^\top,\quad x\in \R^d.
\end{align*}
The result now follows from the same arguments as in the lower triangular case.

%

\subsection{Proof of Proposition \ref{prop:dividend_swap}}
Using the law of iterated expectations and the moment formula \eqref{eq:moment_formula} we get:
\begin{align*}
\E_t[\zeta_{T_2}(C_{T_2}-C_{T_1})]&=
\e^{-\gamma T_2}\left(\e^{\beta T_2}\E_t[q^\top H_1(X_{T_2})  p^\top H_1(X_{T_2})]-\e^{\beta T_1}\E_t[p^\top H_1(X_{T_1})\E_{T_1}[q^\top H_1(X_{T_2})]]\right)\\
&=\e^{-\gamma T_2}\left(\e^{\beta T_2}w_2^\top \e^{G_2(T_2-t)}H_2(X_t)-\e^{\beta T_1}\E_t[p^\top H_1(X_{T_1})q^\top\e^{G_1(T_2-T_1)} H_1(X_{T_1})]\right)\\
&=\e^{-\gamma T_2}\left(\e^{\beta T_2}w_2^\top \e^{G_2(T_2-t)}H_2(X_t)-\e^{\beta T_1}w_1^\top\e^{G_1(T_1-t)} H_2(X_t)\right).
\end{align*}
Note that the vectors $w_1$ and $w_2$ are unique since the basis elements are linearly independent by definition.
Finally, using the bond price formula \eqref{eq:price_bond} we get
\begin{align*}
D_{fwd}(t,T_1,T_2)&=\frac{1}{\zeta_t P(t,T_2)}\E_t[\zeta_{T_2}(C_{T_2}-C_{T_1})]\\
&=\frac{\e^{\beta T_2}w_2^\top \e^{G_2(T_2-t)}H_2(X_t)-\e^{\beta T_1}w_1^\top\e^{G_1(T_1-t)} H_2(X_t)}{q^\top\e^{G_1(T_2-t)}H_1(X_t)}.
\end{align*}
\clearpage


\clearpage


\begin{table}
\center
\begin{tabular}{rccccc}
\toprule
&February & March & March (oos) & April & April (oos)  \\ \midrule
\multicolumn{1}{l}{Dividend futures (ARE in \%)} \\
1y & 0.602 & 1.460 & 1.156 & 1.770 & 0.821 \\ 
2y & 0.982 & 0.743 & 0.949 & 0.941 & 2.344 \\ 
3y & 0.577 & 0.898 & 1.013 & 0.704 & 1.488 \\ 
4y & 0.434 & 0.437 & 0.456 & 0.784 & 0.926 \\ 
5y & 0.549 & 0.466 & 0.434 & 0.506 & 0.343 \\ 
7y & 1.052 & 0.884 & 1.140 & 0.784 & 2.467 \\ 
9y & 0.901 & 0.738 & 0.843 & 1.129 & 3.819 \\ 
[10 pt] \multicolumn{1}{l}{Interest rate swaps (AE in \%)} \\
1y & 0.003 & 0.004 & 0.005 & 0.005 & 0.004 \\ 
2y & 0.021 & 0.032 & 0.011 & 0.037 & 0.011 \\ 
3y & 0.028 & 0.038 & 0.017 & 0.047 & 0.007 \\ 
4y & 0.025 & 0.029 & 0.032 & 0.042 & 0.020 \\ 
5y & 0.021 & 0.025 & 0.047 & 0.026 & 0.040 \\ 
7y & 0.029 & 0.030 & 0.067 & 0.028 & 0.069 \\ 
10y & 0.044 & 0.043 & 0.061 & 0.063 & 0.073 \\ 
[10 pt] \multicolumn{1}{l}{Dividend option (AE in \%)} & 0.407 & 0.871 & 0.912 & 0.365 & 0.531 \\ 
[10 pt] \multicolumn{1}{l}{Swaption (AE in bps)} & 1.063 & 2.092 & 2.331 & 1.142 & 3.516 \\ 
[10 pt] \multicolumn{1}{l}{Stock option (AE in \%)} &  1.868 & 0.932 & 3.482 & 1.089 & 1.129 \\ 
[10 pt] \multicolumn{1}{l}{Index level (ARE in \%)} & 0.038 & 0.028 & 0.065 & 0.023 & 0.059 \\ 
\bottomrule
\end{tabular}
\caption{Averages of the Absolute Error (AE) and Absolute Relative Error (ARE). The out-of-sample (oos) errors are calculated using the parameters calibrated on the month before.}
\label{table:abs_error}
\end{table}

\begin{table}
\center
\begin{tabular}{lrrr}
\toprule
Parameter & February & March & April\\ \midrule
$\beta$ & 0.0045 & 0.0043 & 0.0016 \\ 
$\kappa_1^D$ & 0.018 & 0.018 & 0.022 \\ 
$\theta^D$ & 0.0013 & 0.0015 & 0.0015 \\ 
$\kappa^I_0$ & 3.1e-04 & 2.7e-04 & 2.4e-04 \\ 
$\kappa^I_1$  & 0.17 & 0.16 & 0.22 \\ 
$\gamma$ & 0.053 & 0.047 & 0.035 \\ 
$\sigma^D$ & 0.12 &  0.10 & 0.11 \\ 
$\sigma^I$ & 0.34 & 0.32 & 0.45 \\ 
$\rho$ & 0.97 &  0.80 & 0.99 \\ 
\bottomrule
\end{tabular}
\caption{Calibrated model parameters using daily prices from February, March, and April 2015.}
\label{table:params}
\end{table}

\begin{table}
\center
\begin{tabular}{lcccccc}
\toprule
& $N=2$& $N=3$& $N=4$& $N=5$& $N=6$ & MC\\ \midrule
Swaption& 0.02 &0.02 &0.02 &0.02 &0.02 &1.01 \\
Dividend option & 0.03 &0.06 &0.14 &0.41 &1.24 &25.88 \\
Stock option & 0.02 &0.03 &0.06 &0.07 &0.11 &3.49 \\ \bottomrule
\end{tabular}
\caption{Computation times (in seconds) needed to price swaptions, dividend options, and stock options using a) the maximum entropy method matching $N$ moments and b) Monte-Carlo simulation with $10^5$
sample paths and weekly discretization. The swaption has a maturity of 3 months and underlying swap of 10 years, the dividend option has a maturity of 2 years, and the stock option has a maturity of 3 months. All options have ATM strike. All computations are performed on a desktop computer with Intel Xeon 3.50GHz CPU and 16GB of RAM memory.}
\label{tabel:computation_time}
\end{table}
\clearpage

\begin{figure}
\center
\begin{subfigure}[b]{0.45\textwidth}
\includegraphics[width=\textwidth]{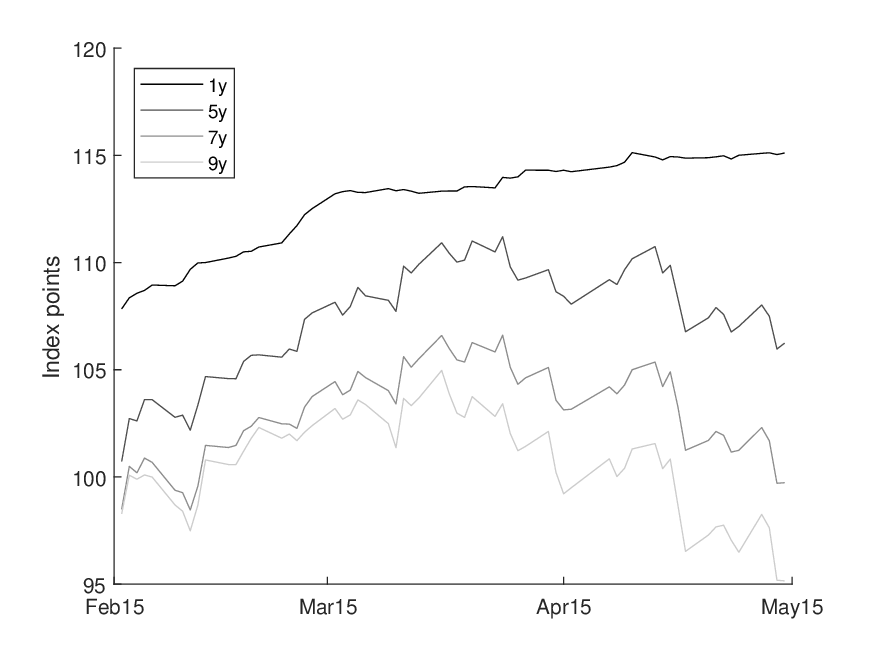}
\caption{Dividend futures}
\label{fig:market_div_fut}
\end{subfigure}
\begin{subfigure}[b]{0.45\textwidth}
\includegraphics[width=\textwidth]{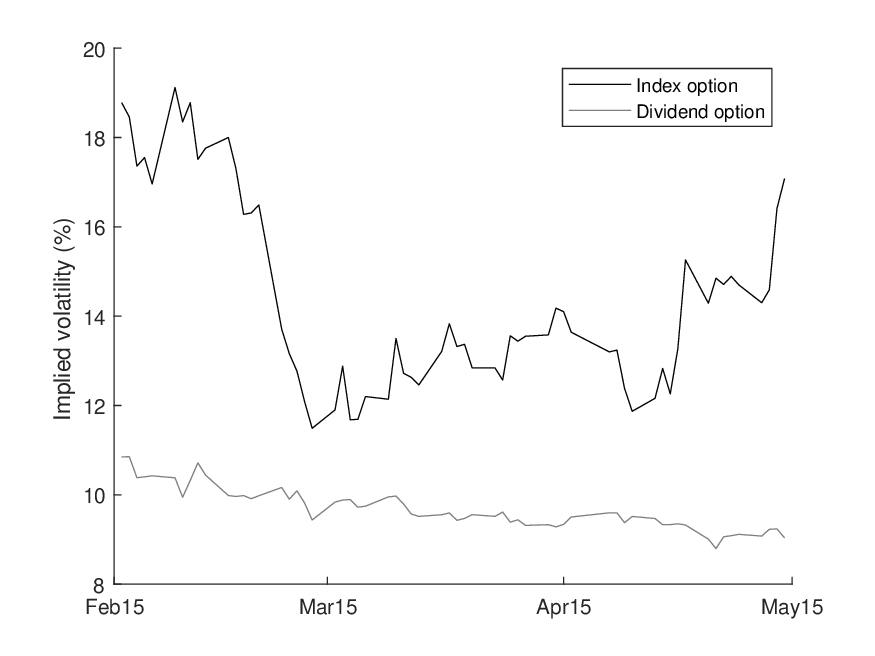}
\caption{Stock option and dividend option}
\label{fig:market_div_eq_opt}
\end{subfigure}
\begin{subfigure}[b]{0.45\textwidth}
\includegraphics[width=\textwidth]{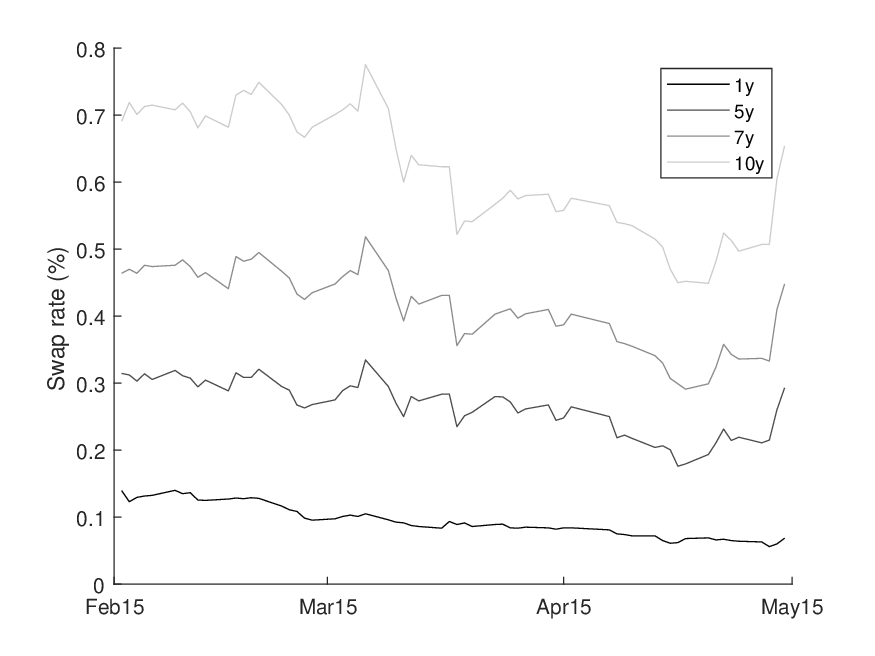}
\caption{Interest rate swaps}
\label{fig:market_IRS}
\end{subfigure}
\begin{subfigure}[b]{0.45\textwidth}
\includegraphics[width=\textwidth]{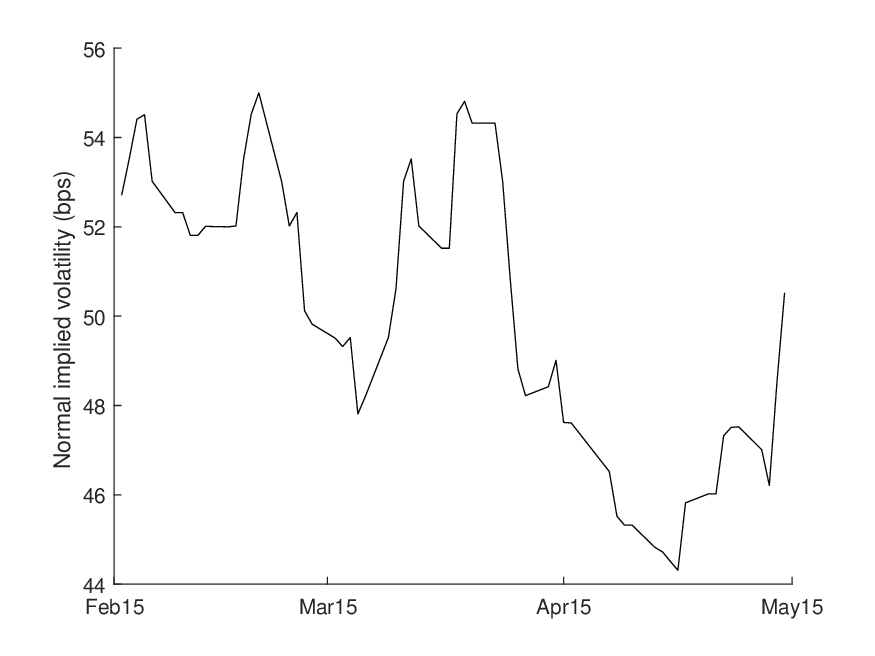}
\caption{Swaption}
\label{fig:market_swpt}
\end{subfigure}
\begin{subfigure}[b]{0.45\textwidth}
\includegraphics[width=\textwidth]{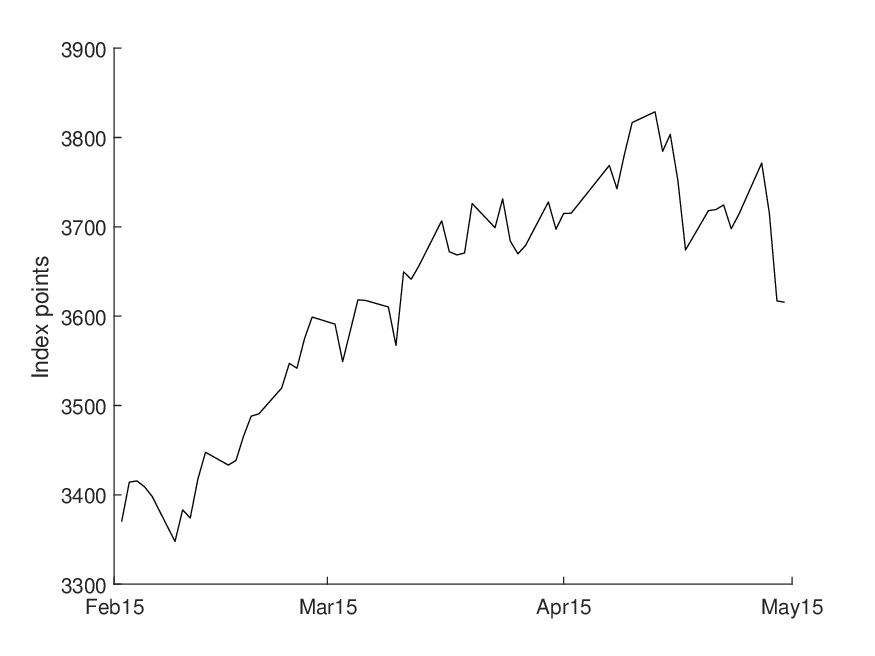}
\caption{Index level}
\label{fig:market_index}
\end{subfigure}
\caption{Data used in the calibration exercise. Dates range from February 2015 until April 2015 at a daily frequency. Figure \ref{fig:market_div_fut} shows the interpolated Euro Stoxx 50 dividend futures prices with a constant time to maturity of 1, 5, 7, and 9 years. The contracts with  time to maturity of 2, 3, and 4 years are not plotted for clarity. Figure \ref{fig:market_IRS} shows the par swap rate of Euribor spot starting swaps with tenors 1, 5, 7, and 10 years. The swap rates with tenors 2, 3, and 4 years are not plotted for clarity. Figure \ref{fig:market_div_eq_opt} shows the Black-Scholes and Black implied volatility, respectively, of ATM Euro Stoxx 50 index and dividend options. The stock option has a time to maturity of 3 months and the dividend option 2 years. Figure \ref{fig:market_swpt} shows the normal implied volatility of swaptions with time to maturity 3 months and the underlying swap has a tenor of 10 years. Figure \ref{fig:market_index} shows the level of the Euro Stoxx 50.}
\label{fig:market}
\end{figure}

\begin{figure}
\center
\begin{subfigure}[b]{0.45\textwidth}
\includegraphics[width=\textwidth]{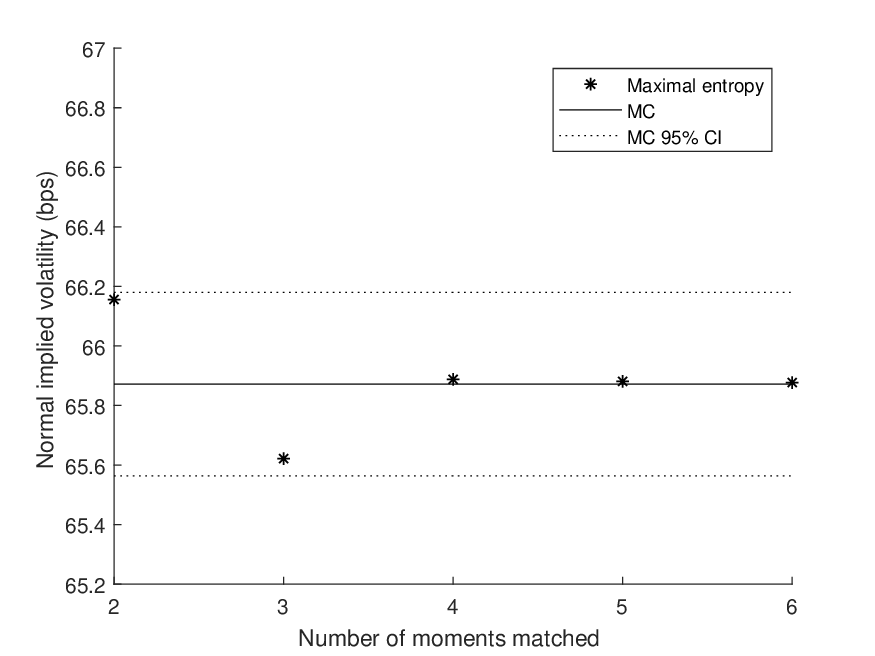}
\caption{Swaption}
\end{subfigure}
\begin{subfigure}[b]{0.45\textwidth}
\includegraphics[width=\textwidth]{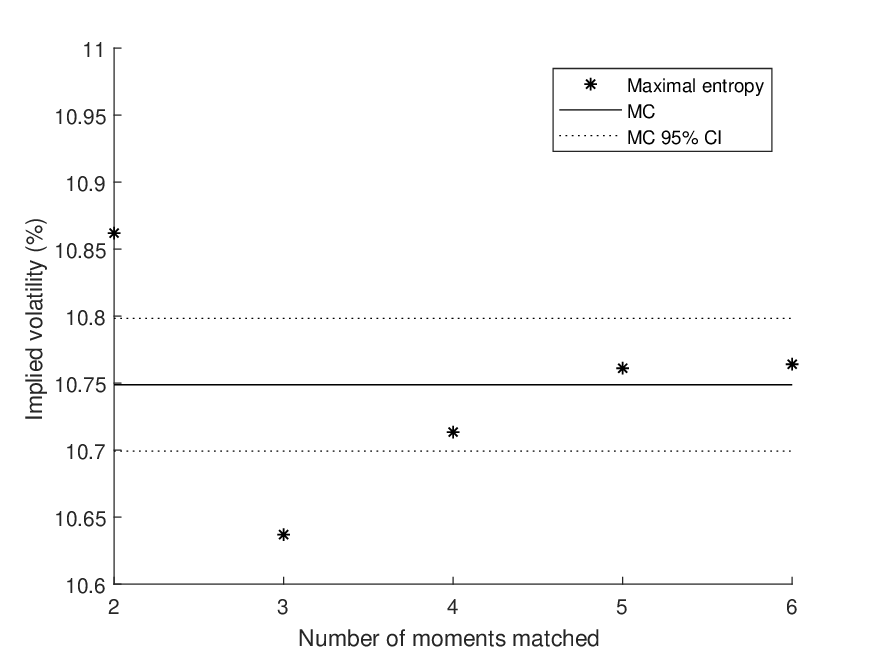}
\caption{Dividend option}
\end{subfigure}
\begin{subfigure}[b]{0.45\textwidth}
\includegraphics[width=\textwidth]{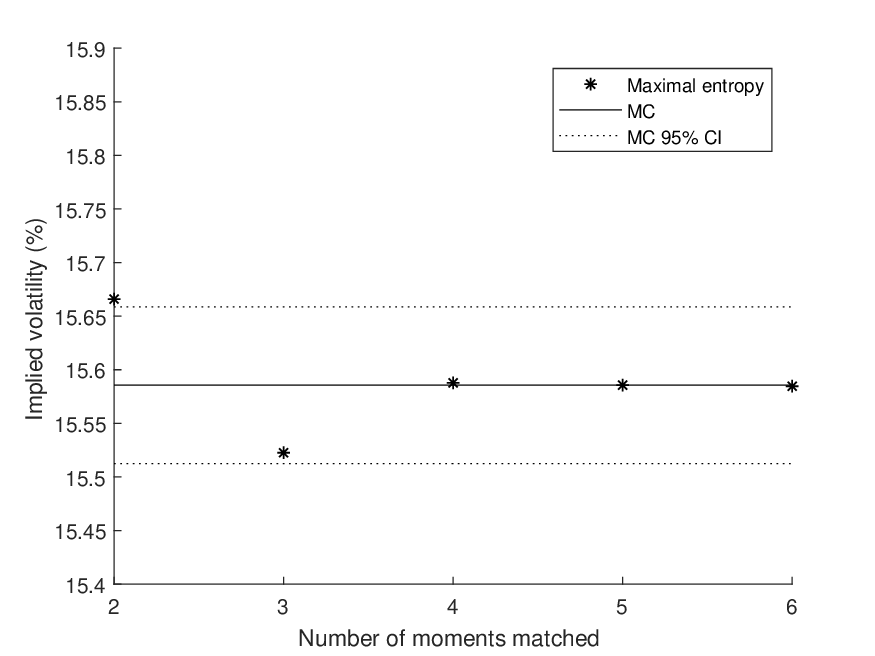}
\caption{Stock option}
\end{subfigure}
\begin{subfigure}[b]{0.45\textwidth}
\includegraphics[width=\textwidth]{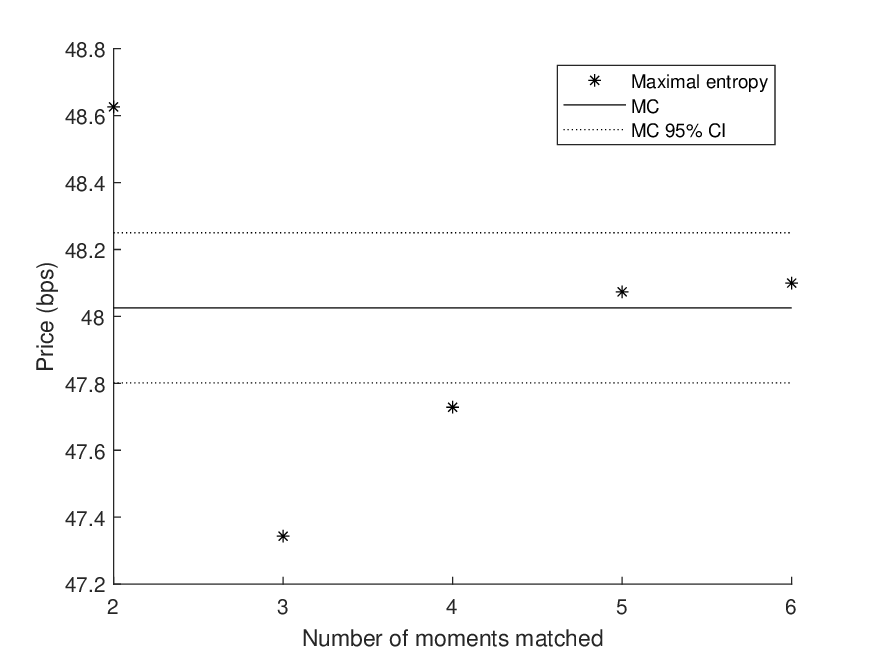}
\caption{Hybrid option}
\end{subfigure}
\caption{Maximum entropy option prices for different number of moments matched. The swaption has maturity 3 months and underlying swap with tenor ten years, the dividend option has maturity 2 years, the stock option has maturity 3 months, and the hybrid option has a single cashflow in 1 year. All options have ATM strike, where we regard the spread $s$ as the strike price for the hybrid option.}
\label{fig:moment_convergence}
\end{figure}

\begin{figure}
\center
\begin{subfigure}[b]{0.45\textwidth}
\includegraphics[width=\textwidth]{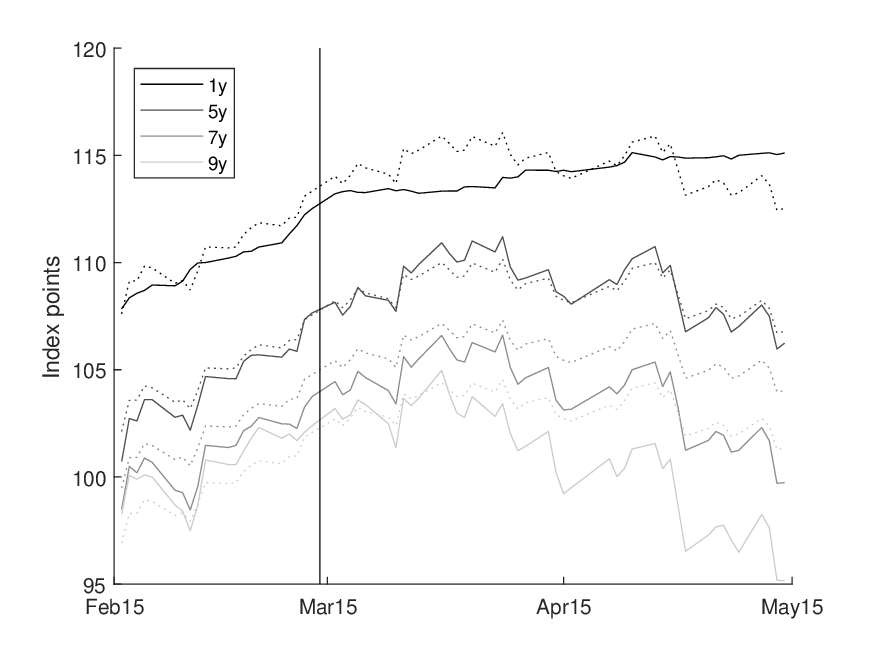}
\caption{Dividend futures}
\label{fig:market_div_fut_fit}
\end{subfigure}
\begin{subfigure}[b]{0.45\textwidth}
\includegraphics[width=\textwidth]{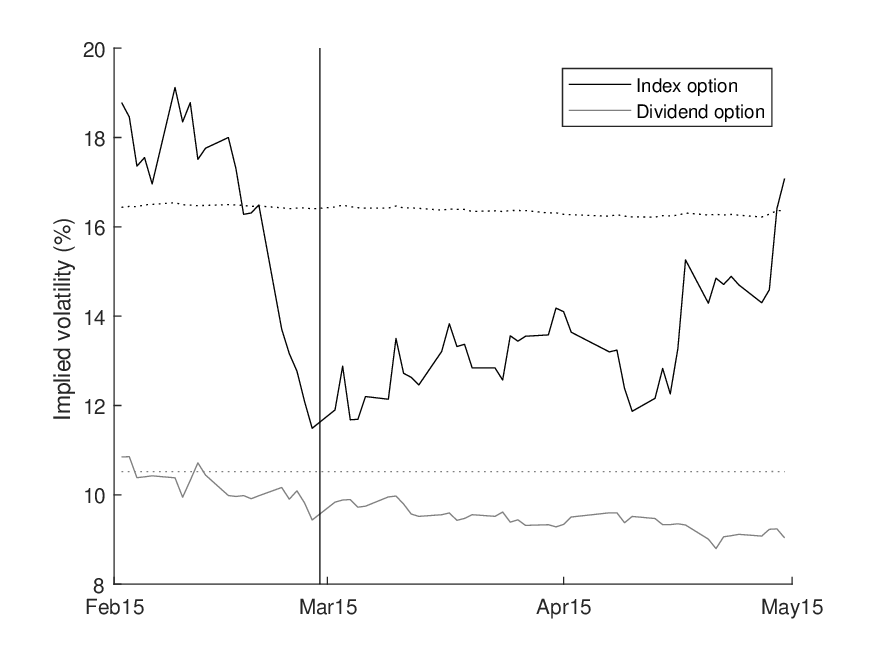}
\caption{Stock option and dividend option}
\label{fig:market_div_eq_opt_fit}
\end{subfigure}
\begin{subfigure}[b]{0.45\textwidth}
\includegraphics[width=\textwidth]{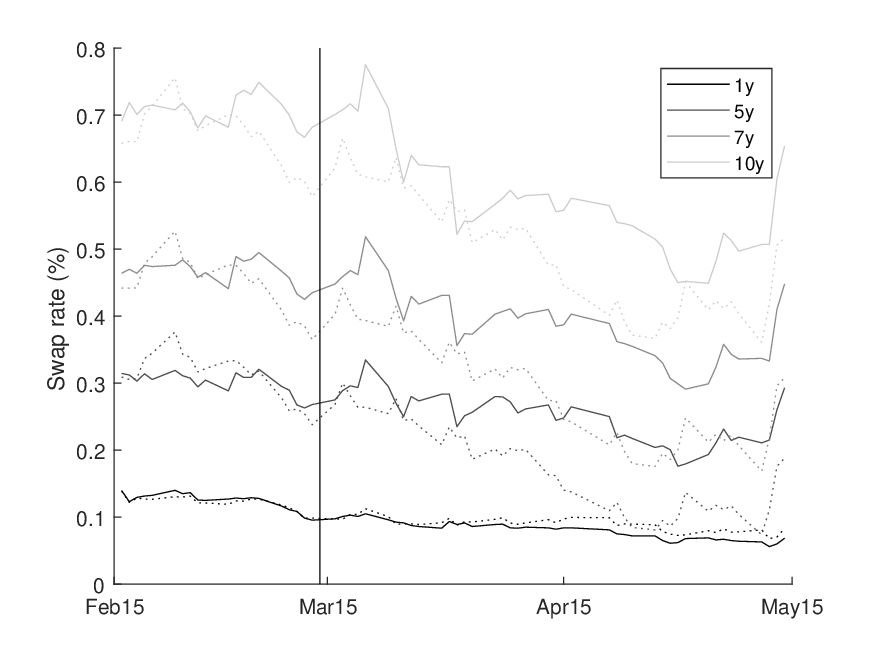}
\caption{Interest rate swaps}
\label{fig:market_IRS_fit}
\end{subfigure}
\begin{subfigure}[b]{0.45\textwidth}
\includegraphics[width=\textwidth]{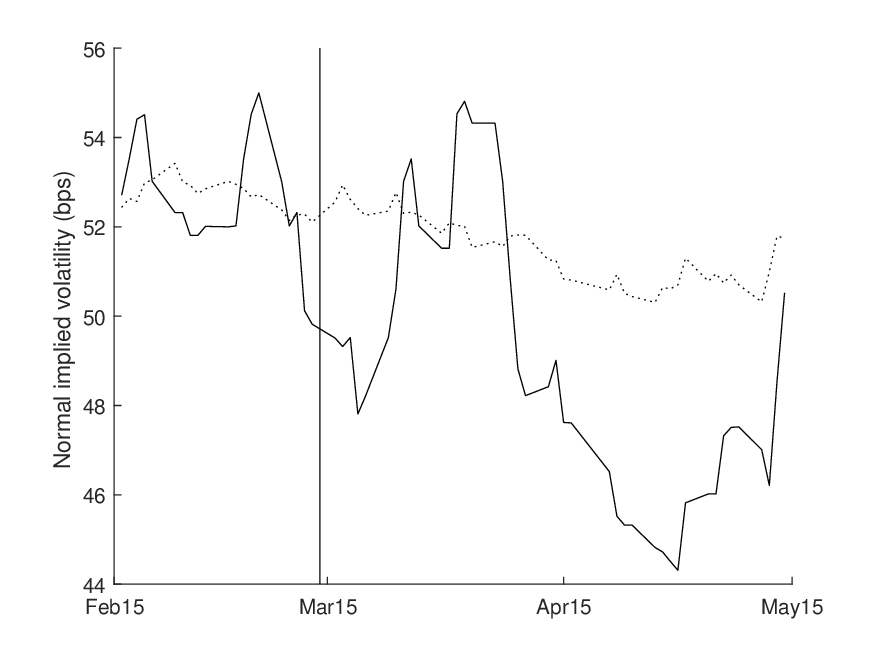}
\caption{Swaption}
\label{fig:market_swpt_fit}
\end{subfigure}
\begin{subfigure}[b]{0.45\textwidth}
\includegraphics[width=\textwidth]{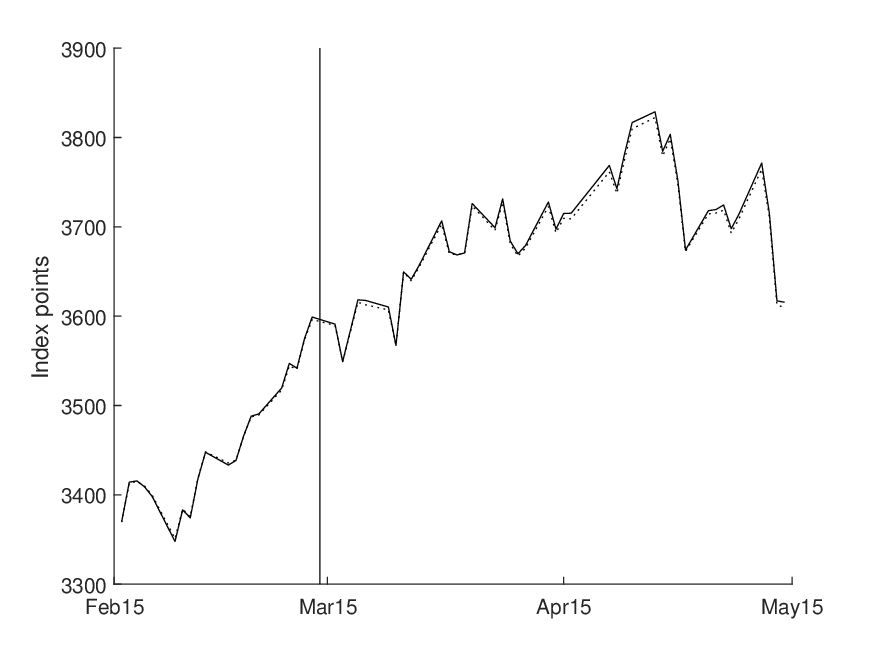}
\caption{Index level}
\label{fig:market_index_fit}
\end{subfigure}
\begin{subfigure}[b]{0.45\textwidth}
\includegraphics[width=\textwidth]{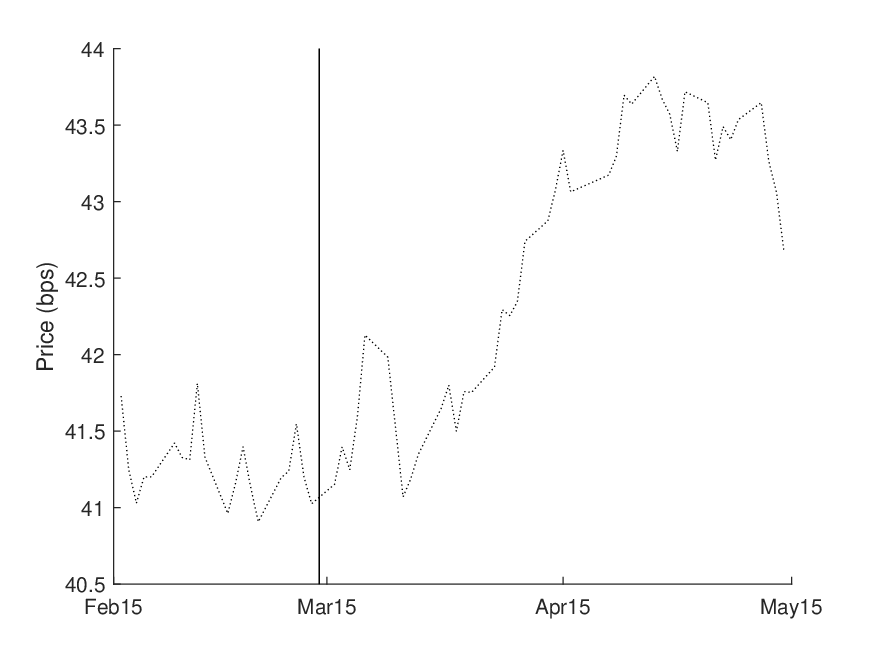}
\caption{Hybrid option price}
\label{fig:hybrid}
\end{subfigure}
\caption{Market prices (solid lines) and model implied prices (dotted lines) using the February parameters. The vertical line indicates the last day of February 2015. The hybrid option price does not have an observable market price.}
\label{fig:market_fit}
\end{figure}

\begin{figure}
\center
\begin{subfigure}[b]{0.45\textwidth}
\includegraphics[width=\textwidth]{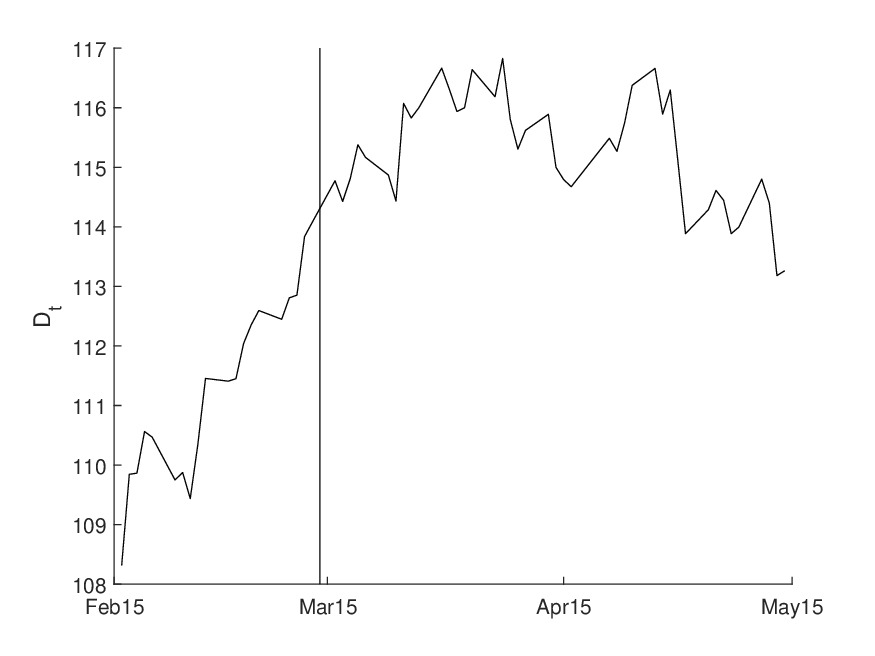}
\caption{Dividend rate $D_t$}
\label{fig:state_div}
\end{subfigure}
\begin{subfigure}[b]{0.45\textwidth}
\includegraphics[width=\textwidth]{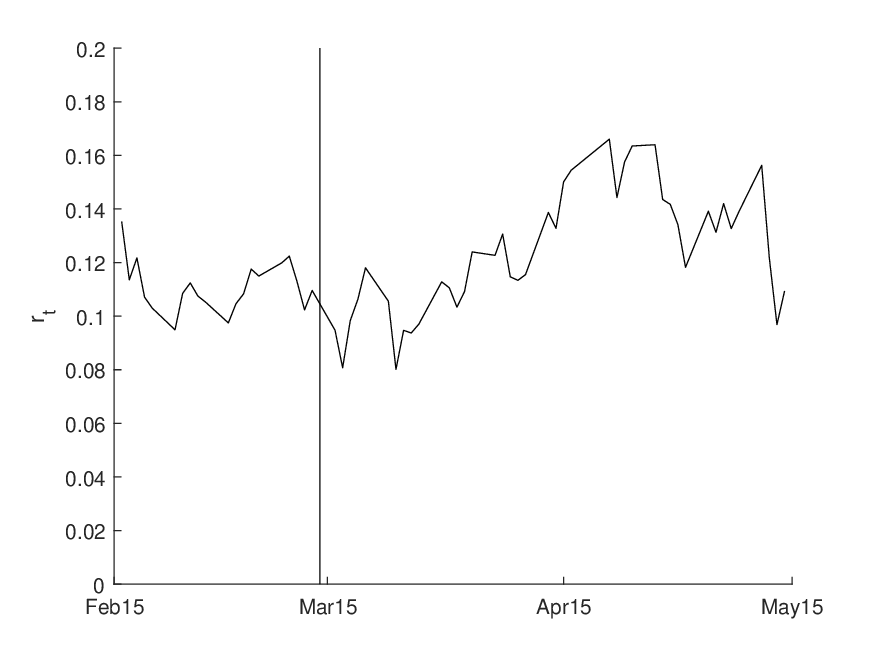}
\caption{Short-rate $r_t$}
\label{fig:short_rate}
\end{subfigure}
\caption{Dividend rate and short-rate using the February parameters. The vertical line indicates the last day of February 2015.}
\label{fig:state}
\end{figure}

\begin{figure}
\center
\includegraphics[width=0.6\textwidth]{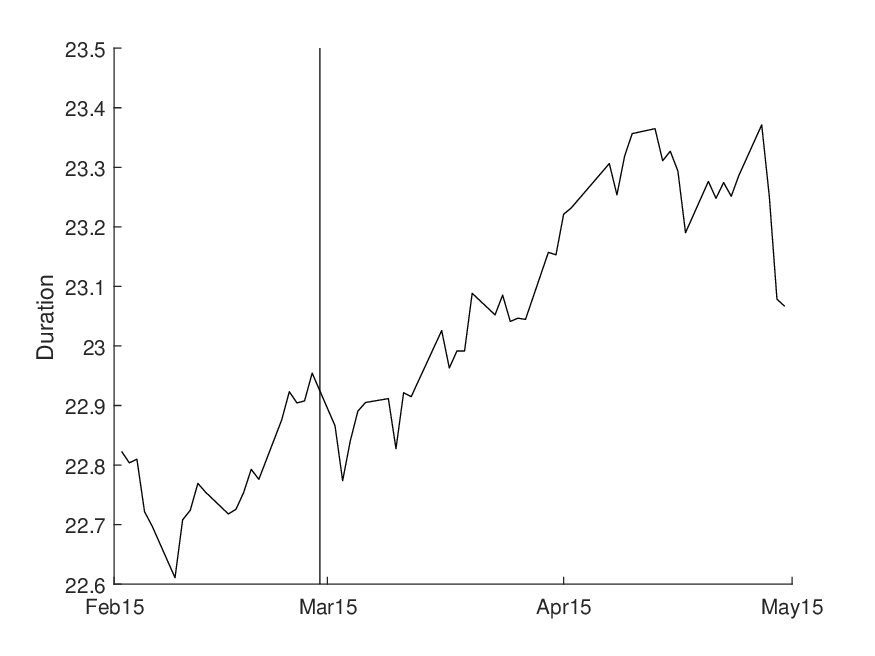}
\caption{Stock duration using the February parameters. The vertical line indicates the last day of February 2015.}
\label{fig:duration}
\end{figure}

\begin{figure}
\center
\includegraphics[scale=0.8]{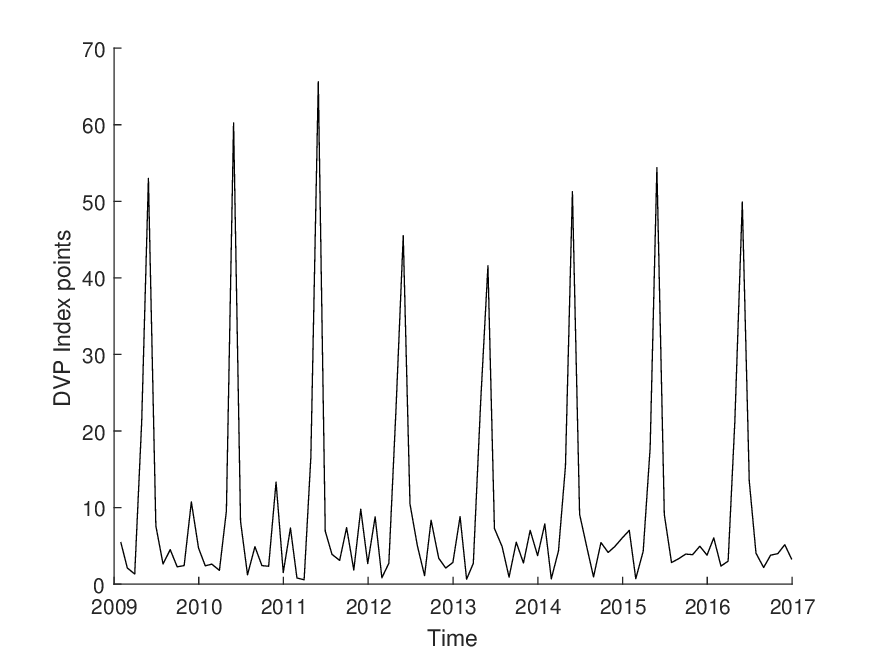}
\caption{Monthly dividend payments by Euro Stoxx 50 constituents (in index points) from January 2009 until December 2016. Source: Euro Stoxx 50 DVP index, Bloomberg.}
\label{fig:seasonality}
\end{figure}

\end{document}